\newcommand{\hide}[1]{\ifthenelse{\boolean{false}}{#1}{}}
\newtheorem{thm}{Theorem}
\newtheorem{lemma}{{\bf Lemma}}
\newtheorem{rem}{Remark}
\newcommand{\dee}{\mathrm{d}}
\newcommand{\yell}{\mathcal{L}}
\newcommand{\ellm}{l_m}
\newcommand{\qed}{\nobreak \ifvmode \relax \else
      \ifdim\lastskip<1.5em \hskip-\lastskip
      \hskip1.5em plus0em minus0.5em \fi \nobreak
      \vrule height0.75em width0.5em depth0.25em\fi}
\begin{document}
%
\title{On Whitespace Identification Using Randomly Deployed Sensors}
\author{Rahul Vaze\IEEEauthorrefmark{1} and Chandra R. Murthy\IEEEauthorrefmark{2}
\thanks{\IEEEauthorrefmark{1}R. Vaze is with the School of Technology and Computer Science, Tata Institute of Fundamental Research, Homi Bhabha Road, Mumbai 400 005, India. \IEEEauthorrefmark{2}C. R. Murthy is with the Dept.\ of Electrical Communication
     Engg.\ at IISc, Bangalore,
     India. (e-mails: vaze@tcs.tifr.res.in, cmurthy@ece.iisc.ernet.in.)}

   \thanks{This work was partially supported by a research project funded by
     the Aerospace Network Research Consortium. }
     
}
%

\maketitle
\vspace{-1cm}
\begin{abstract} 
This work considers the identification of the available whitespace, i.e., the regions that are not covered by any of the existing transmitters,  within a given geographical area. To this end, $n$ sensors are deployed at random locations within the area. These sensors detect for the presence of a transmitter within their radio range $r_s$, and their individual decisions are combined to estimate the available whitespace. The limiting behavior of the recovered whitespace as a function of $n$ and $r_s$ is analyzed. It is shown that both the fraction of the available whitespace that the nodes fail to recover as well as their radio range both optimally scale as $\log(n)/n$ as $n$ gets large. The analysis is extended to the case of unreliable sensors, and it is shown that, surprisingly, the optimal scaling is still $\log(n)/n$ even in this case. A related problem of estimating the number of transmitters and their locations is also analyzed, with the sum absolute error in localization as performance metric. The optimal scaling of the radio range and the necessary minimum transmitter separation is determined, that ensure that the sum absolute error in transmitter localization is minimized, with high probability, as $n$ gets large. Finally, the optimal distribution of sensor deployment is determined,  given the distribution of the transmitters, and the resulting performance benefit is characterized.  
\end{abstract} 
\begin{IEEEkeywords}
Whitespace identification, $k$-coverage, cognitive radio. 
\end{IEEEkeywords}
%
\section{Introduction}
Whitespace identification, or determining the regions within a given geographical area of interest that are not covered by any of the existing transmitters, is useful in several applications~\cite{sandvig2006, feki2008, Mateos2009}. For service providers, it is important for finding  \emph{dead zones}, i.e., the coverage holes, in their service area. For cognitive radio (CR) networks, knowledge of the available whitespace is crucial in order to ensure that the CR nodes do not cause harmful interference to the primary transmitters. This paper addresses this problem using an approach where $n$ sensors are deployed in the geographical area of interest. Their collective observations regarding the presence of a transmitter in their \emph{radio range} $r_s$ are used to estimate the available whitespace. A  challenge here is to determine the optimal scaling of the radio range and the minimum error in the estimated whitespace, as a function of the number of sensors deployed. 

In the recent literature, various approaches for whitespace identification have been considered. One approach uses the receive signal strength (RSS) measurements obtained from the sensors to estimate the number of transmitters and their powers by minimizing the sum of the MSE in the location and power estimates~\cite{nasif2009measurements}. 
The problem of localizing a transmitter using binary observations instead of using analog RSS measurements has also been considered~\cite{niu2006, shoari2010localization, YR2012}. 

A related problem that has received considerable research attention in recent years is that of tracking one or more targets with the help of  multiple sensors (see \cite{Madhow2009} for comprehensive  review of the literature). More specifically, the problem of target tracking with binary sensor measurements has been studied both theoretically \cite{Madhow2009, Madhow2008WS, Aslam2003WS, Manjunath2012WS} and experimentally \cite{Kim2005WS}. In \cite{Madhow2009}, one or more targets that are located arbitrarily in the field of interest are tracked using multiple sensors, while in \cite{Madhow2008WS, Manjunath2012WS}, binary sensor measurements are used to decide on the presence or absence of targets at given locations. However, to the best of our knowledge, there have been no studies in the literature on the limiting behavior of whitespace recovery methods as the number of sensors deployed is increased. Of particular interest are questions related to the optimal scaling of the radio range of the sensors to achieve the minimum whitespace recovery error,  the resulting recovery performance, the optimum spatial distribution of sensors, etc. 

In this paper, we consider a scenario where $n$ sensors are deployed in a given geographical area for whitespace identification. For simplicity, and without  loss of generality, we consider the unit interval $\yell \triangleq [0,1]$ as the area of interest. The $n$ sensors are deployed uniformly at random locations in $\yell$. These sensors detect the presence of a transmitter within a distance $r_s$ from their location, and return a $1$ if there is at least one transmitter in their vicinity, and return $0$ otherwise. The total whitespace recovered is the union of the $2r_s$-length areas around the sensors that return $0$. Our contributions are:
\begin{enumerate}
\item  We show that both the whitespace recovery error (loss), i.e., the fraction of the available whitespace that is not recovered by the $n$ sensors, and the radio range $r_s$ both optimally scale as $\log(n)/n$ as $n$ gets large.\footnote{All logarithms in the sequel are to the base $e$.} 
\item  We extend the analysis to the case where the sensors report erroneous measurements with probability $p$. Surprisingly, even with unreliable sensors, the optimal scaling of the whitespace recovery error and the optimal radio range is still $\log(n)/n$. 
\item We also consider the problem of  transmitter localization, i.e., that of determining the number of active transmitters and their locations. With the sum absolute error in transmitter localization as the metric, we derive the optimum radio range as well as the minimum separation between transmitters that guarantees that the localization error is below a threshold with high probability. 
\item For a given spatial distribution of the transmitters, we derive closed-form expressions for the optimal spatial distribution of sensors that minimizes the probability of not detecting a transmitter and the resulting minimum miss-detection probability. 
\end{enumerate}
We validate our analytical results through Monte Carlo simulations. 
The simulation results also illustrate the significant performance improvement that is obtainable by using the optimal scaling for $r_s$, as the number of sensors $n$ is increased, compared to using a slower or faster decrease of $r_s$ with $n$~(see Fig.~\ref{fig:estnumtx}). Moreover, even though the results are true for large $n$, the scaling of $r_s = \log(n)/n$ is \emph{optimal} even at moderate or low values of $n$. 

Under a similar binary observation model in a 2-dimensional region with fixed sensor placements, it has been shown in \cite{Madhow2009} that the {\it expected} whitespace identification error scales as $\frac{1}{\rho r_s}$, where $\rho$ is the density of sensors and $r_s$ is the radio range. In \cite{Madhow2009}, however, $r_s$ is assumed to remain fixed as $\rho$ is increased, i.e., the results do not hold if $r_s$ is allowed to decrease as $\rho$ increases. In this paper, with $\rho = n$, we are interested in optimal scaling of $r_s$ with $n$, as $\rho$ is increased. Further, in our model, sensors are placed randomly in the region of interest and we obtain results that hold \emph{with high probability}. 
We show that the optimal radio range scales as $r_s=\Theta\left(\frac{\log n}{n}\right)$, 
and we get a localization error of $\Theta\left(\frac{\log n}{n}\right)$ with high probability. 

All of our results directly extend to 2-dimensional regions, with the quantities of interest such as the optimal radio range, optimal transmitter localization error, etc.\ being the square-roots of their counterparts in the one-dimensional case.  
Our results yield useful insights into the number of sensors to be deployed and their radio range for detecting transmitters that maximizes the recovered whitespace and accurately localizes the transmitters within a given geographical area. 

The organization of this paper is as follows. In Sec.~\ref{sec:sysmodel}, we present the modeling assumptions and problem setup. In Sec.~\ref{sec:whitespace1}, we derive analytical results on the whitespace identification  when the sensors are perfectly reliable. Section~\ref{sec:unreliable} extends the results to the case of unreliable sensors, and Sec.~\ref{sec:localization} extends the results to jointly identifying the number of transmitters and their locations, with the sum absolute error in localization as the metric. Section~\ref{sec:distribution} presents the optimum distribution of sensors that minimizes the probability of missing a transmitter. Simulation results are presented in Sec.~\ref{sec:sims}, and concluding remarks are offered in Sec.~\ref{sec:conc}.

\section{System Model} \label{sec:sysmodel}
We consider a unit length segment $\yell$, wherein $M$ transmitters\footnote{In the sequel, we will interchangeably use the phrases ``transmitters'' and ``primary transmitters'' to refer to the transmitters whose locations and transmission footprints we wish to determine.} are arbitrarily located. 
We assume that $n$ sensor are thrown uniformly at random locations on $\yell$. Each sensor has radio range $r_s(n)$, i.e., it can detect the presence of any transmitter that is at most $r_s(n)$ distance away. 
Each sensor returns one of two possible readings $b$, $b = 1$ if there is at least one transmitter at a distance of $r_s(n)$ from it, and $b = 0$ otherwise. The sensor readings are combined at a fusion center to find the region $\mathcal{A}_{\text{void}}$ of $\yell$ that is guaranteed not to contain any transmitter.  Now, if $x_1, x_1, \ldots, x_n$ are the sensor locations in $\yell$ and $b_1, b_2, \ldots, b_n$ are the corresponding sensor readings, then 
\begin{equation}
\mathcal{A}_{\text{void}} = \bigcup_{i=1}^n (1-b_i)  [\min(x_i - r_s, 0), \max(x_i + r_s, 1)].
\end{equation}
Let $\ell(\mathcal{A}) = \int_{x \in \mathcal{A}} \mathrm{d}x$ denote the length of a region denoted by $\mathcal{A}$. For example, if $\mathcal{A}$ is the union of a finite set of disjoint regions, then $\ell(\mathcal{A})$ is the sum of the lengths of the disjoint regions. We define $A_{\text{void}} = \ell(\mathcal{A}_{\text{void}})$ as the length of the region  where no transmitter is located. Note that, since the transmitters are located at distinct points that occupy no area, we would expect that, as $n\rightarrow \infty$, the transmitters are perfectly localized, and, $A_{\text{void}} \rightarrow 1$. Hence, formally speaking, we want to find the minimum $\epsilon(n)$ and a corresponding radio range $r_s(n)$ that guarantees that 
\begin{equation} \label{eq:prob}
\min_{r_s(n), \epsilon(n)}\lim_{n\rightarrow \infty} P\left( (1-A_{\text{void}}) \le \epsilon(n)\right) =1.
\end{equation}
The probability in the above equation is evaluated over the distribution of the sensor locations, where the transmitter locations are assumed to be fixed but arbitrary. This metric essentially captures the scaling of the  relative loss in recovering the whitespace, with increasing number of sensors, as a function of radio range $r_s(n)$. So, there are two problems to solve, i) finding the minimum scaling of the error $\epsilon(n)$ with $n$, and ii) finding the optimal radio range $r_s(n)$ as a function of~$n$.  


We note that, in addition to solving (\ref{eq:prob}), we may also wish to find  the number of transmitters and their locations.
Specifically, given an estimate of the number of transmitters and their locations, if we assume that each transmitter has a transmission range of $r_p$, then the available whitespace consists of the region of $\yell$ from which the subsets of size $2 r_p$ around each transmitter have been removed. We discuss how to determine the number of  transmitters and their locations in Sec.~\ref{sec:localization}.  The quantities $r_s$ and $r_p$ may be different, as the sensitivity of the sensor may be different from the sensitivity of the receiver to which the transmitter's signal was intended. 

The next section presents fundamental bounds on the whitespace recovery error and the corresponding optimal radio range, asymptotically in $n$, when the sensors are perfectly reliable. 



\section{Reliable Sensors} \label{sec:whitespace1}
We first present a lower bound on the error $\epsilon^{*}(n)$, that guarantees that if $\lim_{n\rightarrow \infty} \frac{\epsilon(n)}{\epsilon^{*}(n)} \rightarrow 0$, then  $\lim_{n\rightarrow \infty}P\left((1-A_{\text{void}}) \le \epsilon(n)\right)  \le c$, where $c < 1$ is a constant independent of $n$. In other words, we show that it is not possible to recover the available whitespace with an error smaller than $\epsilon^{*}(n)$ with arbitrarily high probability as $n$ gets large. To derive the lower bound, we will use the following result from the $1$-coverage problem in one dimension~\cite{kumar04kcoverage}. 

\begin{lemma}\label{lem:1-cov} Let $n$ sensors be thrown uniformly at random locations on the one-dimensional unit-length segment $\yell$, where each sensor has radio range of $r(n)$. A point $x$ on  $\yell$ is defined to be {\it covered} if there is at least one sensor in the interval $[x-r(n), \ x+r(n)]$. Then, if $\lim_{n\rightarrow \infty} \frac{r(n)}{\frac{\log n}{n}} \rightarrow 0$, then $\lim_{n\rightarrow \infty}P(\text{all points in $\yell$ are covered} ) < c_2$, where $c_2 <1$ is a constant. A similar result holds in 2-dimensions, with $\frac{r(n)}{\sqrt{\frac{\log n}{n}}}\rightarrow 0$, where a point is said to be covered if there is a sensor in a radius $r(n)$ around it.
\end{lemma}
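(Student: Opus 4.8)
The plan is to prove the (contrapositive) coverage threshold directly by a second-moment argument on the spacings between consecutive sensors, and I will focus on the one-dimensional statement. First I would translate coverage into a statement about gaps. Writing the ordered sensor locations as $s_{(1)} \le \cdots \le s_{(n)}$, the sensor at $s_{(i)}$ covers $[s_{(i)} - r(n), s_{(i)} + r(n)]$, so $\yell$ is covered if and only if $s_{(1)} \le r(n)$, $s_{(n)} \ge 1 - r(n)$, and every consecutive spacing $G_i \define s_{(i+1)} - s_{(i)}$ satisfies $G_i \le 2 r(n)$; indeed the open interval $(s_{(i)} + r(n), s_{(i+1)} - r(n))$ is left uncovered precisely when $G_i > 2 r(n)$. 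Since there are $n-1$ interior spacings but only two boundary terms, non-coverage at a single interior gap already suffices, and I would work solely with these.

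Next I would introduce $N \define \sum_{i=1}^{n-1} \indic{G_i > 2 r(n)}$, the number of oversized interior gaps, and note that coverage fails on $\{N \ge 1\}$, so that $\prob{\text{all points covered}} \le \prob{N = 0}$. For $n$ points placed uniformly on the unit segment the spacings are exchangeable and satisfy the uniform-spacing survival formula $\prob{G_{i_1} > t_1, \dots, G_{i_k} > t_k} = \brac{1 - \sum_{j} t_j}^n$ whenever the $i_j$ are distinct and $\sum_j t_j \le 1$. A first-moment computation then gives $\EX[N] = (n-1)\brac{1 - 2 r(n)}^n$, and taking logarithms, $\log \EX[N] = \log(n-1) - 2 n\, r(n)\,(1 + o(1))$. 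Under the hypothesis $r(n)/(\log n / n) \to 0$ we have $n\, r(n) = o(\log n)$, so $\log \EX[N] = (1-o(1)) \log n \to \infty$; thus the expected number of uncovered gaps diverges. This divergence is the crucial quantitative input.

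A diverging mean does not by itself bound $\prob{N = 0}$ away from $1$, so the second step is to control the variance. Applying the survival formula to two disjoint spacings gives $\prob{G_i > 2r, G_j > 2r} = \brac{1 - 4r}^n$ for $i \ne j$, and since $1 - 4r \le (1 - 2r)^2$ we get $\brac{1 - 4r}^n \le \brac{1 - 2r}^{2n}$, i.e.\ the events $\cbrac{G_i > 2r}$ are pairwise \emph{negatively} correlated. Consequently the off-diagonal covariances are nonpositive and $\mathrm{Var}(N) \le \EX[N]$, so Chebyshev's inequality yields $\prob{N = 0} \le \mathrm{Var}(N)/(\EX[N])^2 \le 1/\EX[N] \to 0$. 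Therefore $\prob{\text{all points covered}} \le \prob{N=0} \to 0$, which is below any fixed $c_2 < 1$ for large $n$, as claimed.

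The only genuinely delicate points are identifying the correct joint law of the spacings (the negative correlation, which makes the variance bound elementary, is a feature special to uniform spacings) and the asymptotic bookkeeping that turns $r(n) = o(\log n / n)$ into $\EX[N] \to \infty$. For the two-dimensional statement the one-dimensional gap decomposition has no direct analogue; instead I would tile the unit square by $\Theta(1/r^2)$ cells of side $\Theta(r)$, let $N$ count the cells whose enlarged neighborhood contains no sensor, and rerun the first/second-moment argument. The threshold shifts from $\log n / n$ to $\sqrt{\log n / n}$ precisely because the relevant ``empty-region'' probability now involves an area, which scales as the square of the length.
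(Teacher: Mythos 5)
Your argument is correct, but note that the paper does not actually prove this lemma: it is imported verbatim from the $1$-coverage literature (the citation to Kumar et al.\ on $k$-coverage), so there is no in-paper proof to match. What you have written is a self-contained substitute, and a sound one. The reduction of coverage to the interior spacings is right (a point of $(s_{(i)}+r, s_{(i+1)}-r)$ has no sensor within distance $r$ precisely when $G_i > 2r$, so coverage forces $N=0$ and $P(\text{covered}) \le P(N=0)$), the joint survival formula $P(G_{i_1}>t_1,\dots,G_{i_k}>t_k) = (1-\sum_j t_j)^n$ for uniform spacings is the standard fact you need, the first-moment computation $\log \EX[N] = (1-o(1))\log n \to \infty$ under $nr(n) = o(\log n)$ is correct, and the inequality $(1-4r)^n \le (1-2r)^{2n}$ does give pairwise negative correlation, hence $\mathrm{Var}(N)\le \EX[N]$ and $P(N=0)\le 1/\EX[N]\to 0$ by Chebyshev. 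In fact you prove something strictly stronger than the stated lemma: the coverage probability tends to $0$, not merely stays bounded below some $c_2<1$; this is consistent with the known sharp threshold, where full coverage requires $2nr(n) - \log n \to +\infty$. Two minor points worth tightening if this were to be written out: the formula $(1-\sum t_j)^n$ requires $\sum t_j \le 1$, which holds for large $n$ since $r(n)\to 0$ but should be said; and the two-dimensional sketch via a $\Theta(1/r^2)$ tiling is the right idea but needs the same pairwise-correlation (or Poissonization) control to close the second-moment step, since cell-emptiness events in a multinomial allocation are not independent.
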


\begin{thm}\label{thm:1Dlb} For the whitespace recovery problem in a one-dimensional unit-length region, if $\lim_{n\rightarrow \infty} \frac{\epsilon(n)}{\frac{\log n}{n}} \rightarrow 0$ or $\lim_{n\rightarrow \infty} \frac{r_s(n)}{\frac{\log n}{n}} \rightarrow 0$, then $P\left((1-A_{\text{void}}) \le \epsilon(n)\right)  < 1$.
\end{thm}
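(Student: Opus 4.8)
The plan is to reduce the converse to the random $1$-coverage problem of Lemma~\ref{lem:1-cov} and then quantify the resulting loss. The first and most important step is the observation that the recovered whitespace is always contained in the region covered by the sensors when each is assigned radius $r_s$: a point $x$ can belong to $\mathcal{A}_{\text{void}}$ only if some sensor that returned $0$ lies within $r_s$ of $x$, so in particular $x$ must lie within $r_s$ of \emph{some} sensor. Writing $\mathcal{C}(r_s)$ for the $r_s$-covered set of Lemma~\ref{lem:1-cov}, this yields $\mathcal{A}_{\text{void}} \subseteq \mathcal{C}(r_s)$, and hence the deterministic, transmitter-configuration-independent bound $1-A_{\text{void}} \ge 1-\ell(\mathcal{C}(r_s))$. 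It therefore suffices to lower bound the length of the uncovered region.

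Second, I would dispatch the radio-range condition $r_s(n)=o(\log n/n)$. Here Lemma~\ref{lem:1-cov} already gives that $\mathcal{C}(r_s)$ fails to cover $\yell$ with probability at least $1-c_2>0$. To convert ``not covered'' into a usable bound on $1-A_{\text{void}}$, I would upgrade this to a statement about the uncovered \emph{length}: ordering the sensor locations $x_{(1)}<\cdots<x_{(n)}$, the uncovered set is the union of the exposed spacings $\max(x_{(i+1)}-x_{(i)}-2r_s,\,0)$, and the classical fact that the maximum spacing of $n$ i.i.d.\ uniform points on $\yell$ is $(1-o(1))\frac{\log n}{n}$ with probability tending to $1$ produces a single exposed gap of length $(1-o(1))\frac{\log n}{n}-2r_s=(1-o(1))\frac{\log n}{n}$. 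Thus $1-A_{\text{void}}\ge(1-o(1))\frac{\log n}{n}$ with probability tending to $1$, which exceeds any $\epsilon(n)=o(\log n/n)$, giving $P\big((1-A_{\text{void}})\le\epsilon(n)\big)\to 0<1$.

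Third, for the error condition $\epsilon(n)=o(\log n/n)$ with an unrestricted radio range I would split on $r_s$. If $r_s=o(\log n/n)$ the previous paragraph applies verbatim. Otherwise $r_s=\Omega(\log n/n)$, coverage is essentially complete and the containment bound is vacuous, so the obstruction must come from the transmitters. Since the transmitter locations are fixed but arbitrary, it suffices to exhibit one configuration defeating every such $r_s$: place $M=\lceil\log n\rceil$ well-separated transmitters in $\yell$. Around a transmitter at $t$ every sensor within distance $r_s$ returns $1$, so $\mathcal{A}_{\text{void}}$ cannot intersect the blind interval $(t-g_L,\,t+g_R)$, where $g_L,g_R$ are the distances from $t\pm r_s$ to the nearest sensor lying strictly outside $[t-r_s,t+r_s]$. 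These residual gaps have mean of order $1/n$ and are essentially independent across well-separated transmitters, so summing the $M$ disjoint blind intervals and invoking concentration gives $1-A_{\text{void}}\ge c'\frac{\log n}{n}$ with high probability, again beating $\epsilon(n)=o(\log n/n)$. (If $r_s$ is so large that every sensor returns $1$, then $A_{\text{void}}=0$ and the claim is immediate.)

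I expect the main obstacle to be exactly the quantitative strengthening of Lemma~\ref{lem:1-cov}: the lemma is an all-or-nothing coverage statement, whereas the theorem demands a lower bound on the uncovered \emph{length} that dominates $\epsilon(n)$. Making the maximum-spacing estimate rigorous, and checking that the $M=\Theta(\log n)$ blind intervals are genuinely disjoint and jointly of length $\Omega(\log n/n)$ with probability bounded away from $0$, is where the real work lies; the containment $\mathcal{A}_{\text{void}}\subseteq\mathcal{C}(r_s)$ and the appeal to Lemma~\ref{lem:1-cov} are the easy, structural parts.
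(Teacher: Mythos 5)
Your treatment of the radio-range condition is correct and is in fact a sharper version of what the paper does: the containment $\mathcal{A}_{\text{void}} \subseteq \mathcal{C}(r_s)$ together with the maximum-spacing estimate turns the qualitative ``coverage fails'' conclusion of Lemma~\ref{lem:1-cov} into a quantitative lower bound of order $\frac{\log n}{n}$ on the unrecovered length, which the paper only gestures at.

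The divergence, and the genuine gap, is in your handling of the condition $\epsilon(n)=o(\log n/n)$ when $r_s(n)=\Omega(\log n/n)$. You defeat it with a configuration of $M=\lceil \log n\rceil$ transmitters, but the paper's converse is proved for $M=1$ (it explicitly reduces to a single transmitter), and it has to be: the matching achievability in Theorem~\ref{thm:1Dub} gives error $3Mc\log n/n$ for fixed $M$, so a lower bound that needs $M$ to grow with $n$ neither matches that upper bound nor establishes the theorem within the paper's model, where $M$ is a fixed constant. Moreover, for a single transmitter at a \emph{fixed} location and $r_s=\Theta(\log n/n)$, the unrecovered length is exactly the sum of the two residual gaps $g_L+g_R$ you describe, and this sum exceeds $\epsilon(n)$ with probability roughly $e^{-n\epsilon(n)}\rightarrow 0$ whenever $n\epsilon(n)\rightarrow\infty$; so no concentration argument rescues the claim for one transmitter placed independently of the sensors. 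The paper's way out is to exploit the arbitrariness of the transmitter location: the guarantee must hold for \emph{every} placement $x$, so every interval of length $\epsilon(n)/2$ (around every candidate point $x\pm r_s$) must contain a sensor, and Lemma~\ref{lem:1-cov} applied with radius of order $\epsilon(n)$ shows this fails with probability at least $1-c_2$. Your own max-spacing machinery yields the same conclusion if used adversarially: place the single transmitter so that $x+r_s$ sits at the left end of the maximum spacing, making $g_R=(1-o(1))\frac{\log n}{n}\gg\epsilon(n)$. Either way, you need to prove the $\epsilon$-condition for $M=1$, not for a transmitter count that scales with $n$.
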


\begin{proof} Consider the case of a single transmitter, i.e., $M=1$. Letting $M=1$ can only give a weaker lower bound. However, we will show that the lower bound is tight in Theorem \ref{thm:1Dub}. 
Then for $(1-A_{\text{void}})\le  \epsilon(n)$
to hold, we need either (i) for $\epsilon(n) < r_s(n)$, at least one sensor in both intervals $\left[x - r_s(n), x-r_s(n)+\frac{\epsilon(n)}{2}\right]$ and $\left[x+r_s(n)-\frac{\epsilon(n)}{2}, x+ r_s(n)\right]$, where sensors in both the intervals give reading $1$, we call this event $A$,
or (ii) at least one sensor in the following four intervals, 
$\left[x-r_s(n), x\right], \left[x, x+r_s(n)\right],  \left[x-r_s(n)-\frac{\epsilon(n)}{2}, x-r_s(n)\right],  \left[x+r_s(n), x+r_s(n)+ \frac{\epsilon(n)}{2}\right]$, we call this event $B$,
where the sensors in the first two intervals give readings $1$, while the sensors in the last two intervals give readings $0$. We illustrate the events $A$ and $B$ in Fig.~\ref{fig:lb}. Since the  transmitter can be arbitrarily located, $x$ can be anywhere in $\yell$.  Thus, essentially, we need all intervals of length $r_s(n)$ and $\epsilon(n)/2$ to have at least one sensor.  
\begin{figure*}[t]
\centering
\includegraphics[width=4.5in]{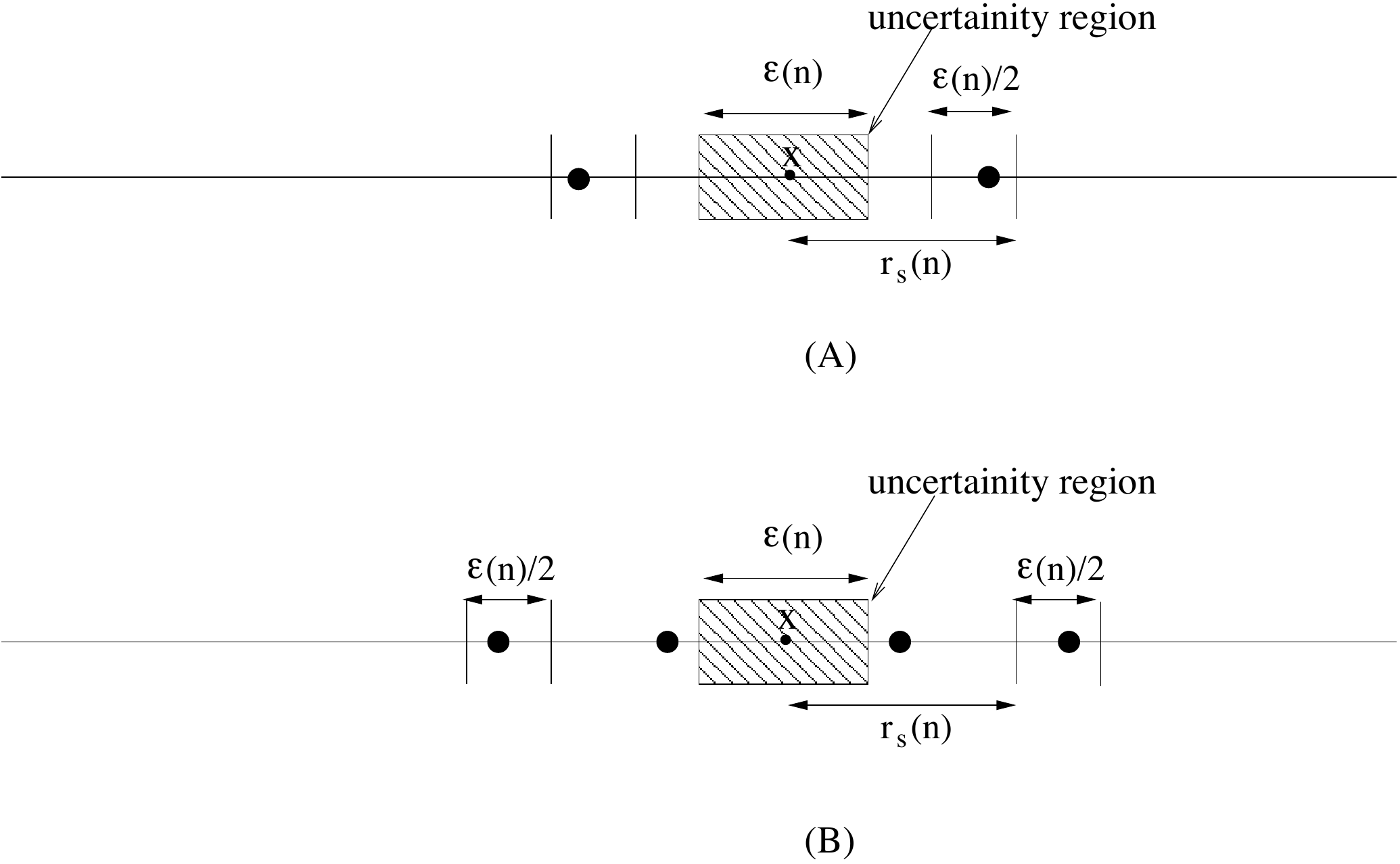}
\caption{Depiction of the uncertainty in transmitter location for the lower bound.}
\label{fig:lb}
\end{figure*}

To find the probability that all intervals of length $r_s(n)$ and $\epsilon(n)/2$ have at least one sensor, we use Lemma \ref{lem:1-cov}. Note that the setting in this Theorem is identical to Lemma \ref{lem:1-cov}, where $n$ sensors (in place of transmitters) with radio range $r_s(n)$  are thrown randomly in $\yell$.
From Lemma \ref{lem:1-cov}, we know that if $r_s(n)$ is less than order $\frac{\log n}{n}$, then  $\lim_{n\rightarrow \infty}P(\text{each point in $\yell$ is covered} ) < c_2$, where $c_2 <1$ is a constant. If any point on $\yell$ is not covered, then surely the interval of length $2 r_s(n)$ around it has no sensor. 
Hence, if $2r_s(n)$ is less than order $\frac{\log n}{n}$, then from Lemma \ref{lem:1-cov} there exists an interval of width $2r_s(n)$ that does not have any sensor with probability greater than $1-c_2$. Similarly, if $\epsilon(n)$ is less than order $\frac{\log n}{n}$, there exists an interval of width $\epsilon(n)$ that does not have any sensor with probability greater than $1-c_2$ from Lemma \ref{lem:1-cov}, thereby violating the conditions for having $(1-A_{\text{void}})\le  \epsilon(n)$. Thus, if $\lim_{n\rightarrow \infty} \frac{\epsilon(n)}{\frac{\log n}{n}} \rightarrow 0$ or $\lim_{n\rightarrow \infty} \frac{r_s(n)}{\frac{\log n}{n}} \rightarrow 0$, then $P\left((1-A_{\text{void}}) \le \epsilon(n)\right)  <1$.
\end{proof}

The result for the 2-dimensional region is as follows.
\begin{thm}\label{thm:1Dlb2D} For a 2-dimensional region, if $\lim_{n\rightarrow \infty} \frac{\epsilon(n)}{\sqrt{\frac{\log n}{n}}} \rightarrow 0$ or $\lim_{n\rightarrow \infty} \frac{r_s(n)}{\sqrt{\frac{\log n}{n}}} \rightarrow 0$, then $P\left((1-A_{\text{void}}) \le \epsilon(n)\right)  < 1$.
\end{thm}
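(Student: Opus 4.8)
The plan is to transcribe the proof of Theorem~\ref{thm:1Dlb} into two dimensions, substituting the two-dimensional form of Lemma~\ref{lem:1-cov} (the one with threshold $\sqrt{\log n / n}$ and covering disks of radius $r(n)$) for the one-dimensional coverage statement. As before, I would take $M = 1$, placing a single transmitter at an arbitrary but fixed point $x$ of the region, since this can only weaken the bound, which is all that is needed. For one transmitter, $\mathcal{A}_{\text{void}}$ is the union of the radius-$r_s(n)$ disks centred at the $0$-reading sensors, namely the sensors lying outside the radius-$r_s(n)$ disk about $x$.

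Following the one-dimensional template, the condition $(1 - A_{\text{void}}) \le \epsilon(n)$ again forces every small region, at two distinct scales, to contain a sensor: the scale $r_s(n)$ of the detection disk, and the scale $\epsilon(n)$ governing how finely the detection boundary must be resolved. I would treat the two branches separately. For the radio-range branch, observe that a point with no sensor within $r_s(n)$ lies in no void disk, so success requires the radius-$r_s(n)$ sensor disks to cover the region; applying the two-dimensional Lemma~\ref{lem:1-cov} with $r(n) = r_s(n)$, when $r_s(n)/\sqrt{\log n/n}\to 0$ there is, with probability at least $1-c_2$, an uncovered region. For the error branch, I would apply the same lemma with covering radius of order $\epsilon(n)$: when $\epsilon(n)/\sqrt{\log n/n} \to 0$, with probability at least $1 - c_2$ there is a sensor-free disk of radius $\Theta(\epsilon(n))$ near the detection boundary, so that the whitespace there cannot be recovered to within $\epsilon(n)$. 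Either way, $P((1 - A_{\text{void}}) \le \epsilon(n)) \le c_2 < 1$.

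The main obstacle, and the place where the $\sqrt{\cdot}$ in the threshold genuinely enters, is the dimensional bookkeeping that ties the linear coverage scale supplied by Lemma~\ref{lem:1-cov} to the area metric $1 - A_{\text{void}}$. In one dimension length and area coincide, so a sensor-free interval of width $\epsilon(n)$ immediately accounts for a lost region of measure $\epsilon(n)$; in two dimensions one must argue that the sensor-free disk of radius $\Theta(\epsilon(n))$, respectively the gap in $r_s(n)$-coverage, drags along an uncovered region of area exceeding the budget $\epsilon(n)$ rather than a set of negligible measure, and in particular that the single uncovered point guaranteed by the coverage lemma inflates to a region of positive area of the right order. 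I expect this conversion to be the only substantive step; the reduction to $M = 1$ and the construction of the two-scale covering requirements are otherwise a direct replay of the one-dimensional events $A$ and $B$.
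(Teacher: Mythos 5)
Your proposal matches the paper's own proof, which consists entirely of the sentence ``Similar to the proof of Theorem~\ref{thm:1Dlb}, using the $2$-dimensional part of Lemma~\ref{lem:1-cov}.'' The dimensional-bookkeeping step you flag as the main obstacle --- converting the uncovered point supplied by the coverage lemma into a sensor-free region whose \emph{area} is commensurate with $\epsilon(n)$ rather than with $r_s(n)^2$ --- is not addressed in the paper at all, so your sketch is, if anything, more candid about where the substance of the two-dimensional argument lies.
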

\begin{proof} Similar to the proof of Theorem \ref{thm:1Dlb}, using the $2$-dimensional part of Lemma~\ref{lem:1-cov}.
\end{proof}

Next, we show that if $r_s(n) = \Theta\left(\frac{\log n}{n}\right)$ and  $\epsilon(n) = \Theta\left(\frac{\log n}{n}\right)$, then the lower bound on the whitespace recovery error obtained in Theorem \ref{thm:1Dlb} is tight. For the proof,  we will need the following Chernoff bound result.

\begin{lemma}\label{lem:chernoff} Let $X_1, X_2, \dots$ be identical and independently distributed Bernoulli random variables with mean $\mu = \mathbb{E}\{X_i\}$, and let $X = \sum_{i=1}^nX_i$. Then for $0<\delta < 1$, we have that 
$P(X < (1-\delta)\mu) \le \exp\left(-\frac{n \delta^2 \mu}{2}\right)$.
\end{lemma}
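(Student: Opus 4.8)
The plan is to use the standard exponential Markov inequality, i.e.\ the Chernoff method, specialized to the lower tail. First I would observe that $X=\sum_{i=1}^n X_i$ has mean $\EX\cbrac{X}=n\mu$, so that the claimed bound is really a statement about the probability that $X$ falls a multiplicative factor $(1-\delta)$ below its mean. The key reduction is to pass to the moment generating function of $-X$: for any parameter $t>0$, applying Markov's inequality to the nonnegative random variable $e^{-tX}$ gives
\[
P\brac{X < (1-\delta)n\mu} = P\brac{e^{-tX} > e^{-t(1-\delta)n\mu}} \le e^{t(1-\delta)n\mu}\,\EX\cbrac{e^{-tX}}.
\]
Because the $X_i$ are i.i.d., the expectation tensorizes, $\EX\cbrac{e^{-tX}}=\brac{\EX\cbrac{e^{-tX_1}}}^{n}$, so the entire problem collapses to controlling a single Bernoulli factor and then optimizing over $t$.

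Next I would bound that single factor. For a Bernoulli variable of mean $\mu$ one has $\EX\cbrac{e^{-tX_1}} = 1+\mu\brac{e^{-t}-1}$, and the elementary inequality $1+z\le e^{z}$ yields $\EX\cbrac{e^{-tX_1}}\le \exp\brac{\mu(e^{-t}-1)}$. Substituting back gives
\[
P\brac{X<(1-\delta)n\mu} \le \exp\brac{n\mu\,\sbrac{t(1-\delta)+e^{-t}-1}}.
\]
The exponent is convex in $t$, so I would minimize it; differentiating $t(1-\delta)+e^{-t}-1$ and setting the derivative to zero gives the optimizer $t^{*}=-\log(1-\delta)$, which is positive precisely because $0<\delta<1$. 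Plugging $t^{*}$ back in leaves the exponent $n\mu\sbrac{-(1-\delta)\log(1-\delta)-\delta}$, which is the sharp Chernoff--Hoeffding form of the bound.

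The main obstacle, and really the only nontrivial step, is the scalar simplification from this sharp form to the stated Gaussian-type bound $\exp\brac{-n\mu\delta^2/2}$. Concretely, I must verify the inequality $(1-\delta)\log(1-\delta)+\delta \ge \delta^2/2$ on $(0,1)$. I would establish this analytically by setting $g(\delta)=(1-\delta)\log(1-\delta)+\delta-\delta^2/2$, noting $g(0)=0$ and $g'(0)=0$, and computing $g''(\delta)=\delta/(1-\delta)\ge 0$ on $[0,1)$; since $g$ is convex with a stationary value of zero at the origin, it follows that $g\ge 0$ throughout. This gives $-(1-\delta)\log(1-\delta)-\delta\le -\delta^2/2$, and hence $P\brac{X<(1-\delta)n\mu}\le \exp\brac{-n\mu\delta^2/2}$, as claimed.
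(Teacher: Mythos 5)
Your proof is correct: the exponential-moment (Markov) step, the factorization of $\mathbb{E}\{e^{-tX}\}$ over the i.i.d.\ Bernoulli variables, the bound $1+z\le e^{z}$, the optimal choice $t^{*}=-\log(1-\delta)$, and the closing scalar inequality $(1-\delta)\log(1-\delta)+\delta\ge \delta^{2}/2$ (justified via $g(0)=g'(0)=0$ and $g''(\delta)=\delta/(1-\delta)\ge 0$) are all sound. The paper states this lemma without proof, invoking it as a standard Chernoff bound, and your derivation is exactly the standard argument being relied upon; you also correctly read the deviation threshold as $(1-\delta)n\mu$ (the mean of the sum $X$), which the paper's statement writes slightly loosely as $(1-\delta)\mu$ even though its exponent $n\delta^{2}\mu/2$ presupposes your reading.
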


\begin{thm}\label{thm:1Dub}  If $r_s(n) = \Theta\left(\frac{\log n}{n}\right)$, then for $\epsilon(n) = \Theta\left(\frac{\log n}{n}\right)$, $P\left( 1-A_{\text{void}} \le \epsilon(n)\right)  \rightarrow 1$.
\end{thm}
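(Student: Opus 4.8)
The plan is to exhibit explicit choices $r_s(n) = c_1 \frac{\log n}{n}$ and $\epsilon(n) = 3M\,r_s(n)$, both of which are $\Theta\!\left(\frac{\log n}{n}\right)$ since $M$ and $c_1$ are constants, and to show that a single ``good event'' --- that every short sub-interval of $\yell$ contains at least one sensor --- already forces $1-A_{\text{void}}\le \epsilon(n)$. Concretely, I would define the event
\[
E = \{\,\text{every sub-interval of } \yell \text{ of length } r_s/2 \text{ contains at least one of the } n \text{ sensors}\,\},
\]
and reduce the theorem to proving $P(E)\to 1$, since $P\!\left(1-A_{\text{void}}\le \epsilon\right)\ge P(E)$ once the deterministic implication $E \Rightarrow 1-A_{\text{void}}\le \epsilon$ is established.

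To bound $P(E)$, I would partition $\yell$ into $\Theta(n/\log n)$ bins of length $\Theta(r_s)=\Theta\!\left(\frac{\log n}{n}\right)$ and apply Lemma~\ref{lem:chernoff} to the number of sensors landing in a fixed bin, whose mean is $\Theta(c_1\log n)$. The Chernoff bound makes the probability that a given bin is empty at most $n^{-\Theta(c_1)}$, and a union bound over the $\Theta(n/\log n)$ bins drives the total failure probability to $0$ provided $c_1$ is chosen large enough. Since any interval of length $r_s/2$ contains a full bin, this yields $P(E)\to 1$. This step is the achievability counterpart of the coverage statement in Lemma~\ref{lem:1-cov}.

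The crux is the deterministic geometric step. Let $U = \bigcup_{j=1}^{M}[x_j-r_s,\,x_j+r_s]$ be the union of the transmitter footprints, so that a sensor reports $1$ exactly when it lies in $U$ and reports $0$ otherwise. I claim that, on $E$, every point $y$ with $\mathrm{dist}(y,U) > r_s/2$ is recovered: the interval $[y-r_s/2,\,y]$ contains some sensor $s$, and because $|y-s|\le r_s/2 < \mathrm{dist}(y,U)$ we have $\mathrm{dist}(s,U)>0$, so $s\notin U$ and hence reports $0$; since $|y-s|\le r_s/2 \le r_s$, this $0$-sensor covers $y$. Consequently the unrecovered set is contained in the $r_s/2$-neighborhood $U^{+r_s/2}$ of $U$. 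As $U$ has at most $M$ connected components and total length at most $2M r_s$, enlarging by $r_s/2$ on each side of each component gives $1-A_{\text{void}} \le \ell\!\left(U^{+r_s/2}\right) \le 2M r_s + M r_s = 3M r_s = \epsilon(n)$, which holds on $E$ and therefore with probability tending to $1$.

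I expect this geometric step to be the main obstacle. The subtlety is that coverage must be supplied by $0$-reporting sensors \emph{only}, whereas near the transmitters all sensors report $1$; the distance-to-$U$ argument is precisely what separates these regimes, formalizing the fact that $0$-sensors reach into the footprints from the sides so that only an $O(r_s)$-neighborhood of $U$ is lost rather than all of $U$ plus coverage gaps. Secondary points to verify are that the constants can be chosen consistently (the same $c_1$ making the union bound vanish while keeping $\epsilon=\Theta\!\left(\frac{\log n}{n}\right)$), and that clamping the footprints and sensor ranges at the endpoints of $\yell$ only shrinks lengths and hence does not weaken the bound. Together with the lower bound of Theorem~\ref{thm:1Dlb}, this establishes that the optimal scaling of both $r_s$ and $\epsilon$ is $\Theta\!\left(\frac{\log n}{n}\right)$.
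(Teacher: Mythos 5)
Your proposal is correct and follows essentially the same route as the paper's proof: a Chernoff bound (Lemma~\ref{lem:chernoff}) plus a union bound over $\Theta(n/\log n)$ bins of width $\Theta\left(\frac{\log n}{n}\right)$ to guarantee sensors everywhere, followed by a deterministic observation that only an $O(M r_s)$ neighborhood of the $M$ transmitters can fail to be recovered, yielding the same bound $1-A_{\text{void}} \le 3M r_s$. The only difference is cosmetic: you phrase the geometric step via the $r_s/2$-neighborhood of the footprint union $U$, whereas the paper counts the at most $3M$ bins adjacent to a transmitter; your version is arguably a more careful writeup of the same argument.
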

\begin{proof} Let $r_s(n) =  \left(\frac{c \log n}{n}\right)$, where $c>1$ is a constant. Divide  $\yell$ into smaller non-overlapping intervals of length $\left(\frac{c \log n}{n}\right)$, and index these segments as $z_1$ to $z_{\left(\frac{n}{c \log n}\right)}$. Let the number of sensors lying in $z_k$ be $N_k$. From the Chernoff bound in Lemma \ref{lem:chernoff}, $P( N_k < \frac{c \log n}{2}) \le n^{-c/8}$, and taking the union bound, $P(N_k < \frac{c \log n}{2} \ \text{for any}\  k)\  < c_4n^{1-c/8}$, where $c_4$ is a constant. Thus, for large enough $c$, with high probability, each small interval $z_k$  contains at least  $\frac{c \log n}{2} $ sensors. 

Since there are $M$  transmitters, at the maximum, only $M$ smaller intervals among $z_1 \dots z_{\left(\frac{n}{c \log n}\right)} $ contain any  transmitter. Since the radio range $r_s= \left(\frac{c\log n}{n}\right)$,  a  transmitter lying in interval $z_k$ can only influence readings of sensors lying in $3$ adjacent intervals $z_{k-1}, z_{k},$ and, $z_{k+1}$. Therefore, there are at least $ \left(\frac{n}{c \log n}\right) - 3M$ intervals among $z_1, \dots, z_{\left(\frac{n}{c \log n}\right)} $ in which all sensor readings are $0$. Thus, an area of width $ \left(\left(\frac{n}{c \log n}\right) - 3M\right) \left(\frac{c \log n}{n}\right)$ contains no  transmitter, i.e. $A_{\text{void}} > \left(\left(\frac{n}{c \log n}\right) - 3M\right) \left(\frac{c \log n}{n}\right)=1-3M \left(\frac{c \log n}{n}\right)$. Therefore, with high probability, $(1-A_{\text{void}}) \le 3M \left(\frac{c \log n}{n}\right)$, proving the Theorem.
%
\end{proof}
\begin{rem}
The result for the 2-dimensional region follows similarly, with $r_s(n) = \Theta\left(\sqrt{\frac{\log n}{n}}\right)$ and  $\epsilon(n) = \Theta\left(\sqrt{\frac{\log n}{n}}\right)$. Further, all the results in the sequel are valid for 2-dimensional regions also, but we omit the formal statements to avoid repetition. 
\end{rem}
%

In this section, we have shown that asymptotically in $n$, the optimal radio range scales as $\frac{\log n}{n}$, and the corresponding optimal whitespace recovery error also scales as $\frac{\log n}{n}$. 
For finding the lower bound, we leveraged the results on the coverage problem. Then, we used a Chernoff bound result to show that if radio range is order $\frac{\log n}{n}$ then each interval of width $\frac{\log n}{n}$ contains $\log n$ sensors with high probability, and, hence, we can get a whitespace recovery accuracy of order $\frac{\log n}{n}$ for large enough $n$. In this section, we assumed that the sensor readings were error-free. Surprisingly, the above results hold even when the sensors are unreliable, as we show in the following section. 


\section{Unreliable Sensors} \label{sec:unreliable}

In this section, we assume that sensors are unreliable, and make an error in reading with probability 
$p < \frac{1}{2}$ independently of all other sensors. Thus, a sensor reading is $1$ even if there is no transmitter within a range of $r_s$ around it, or a sensor reading is $0$ even if there is a transmitter within a range of $r_s$ around it, and both events happen with probability $p$. In reality, the sensor errors could be a function of the distance from the transmitter, both in terms of missing a transmitter or identifying one when it is not present within the sensing radius $r_s$. Let the upper bound on the miss probability or false alarm of any sensor be $p$ over the sensing radius $r_s$. Then, our model, where each sensor makes an error with probability $p$, essentially takes care of the worst case scenario, while simultaneously simplifying the analysis.

As before, we are interested in finding the minimum error $\epsilon(n)$ and radio range $r_s(n)$ that solves the optimization problem 
\begin{equation}
\min_{r_s(n), \epsilon(n)}\lim_{n\rightarrow \infty} P\left((1-A_{\text{void}}) \le \epsilon(n)\right) =1.
\end{equation}

The following Theorem is the analog of Theorem \ref{thm:1Dlb}, with unreliable sensors. Its proof follows simply because the lower bound with unreliable sensors cannot be better than the lower bound with reliable sensors derived in Theorem \ref{thm:1Dlb}.


\begin{thm}\label{thm:1Dlbunreliable} 
When sensor measurements are in error with probability of error $p < \frac{1}{2}$, and for the whitespace recovery problem in a one-dimensional unit-length region, if $\lim_{n\rightarrow \infty} \frac{\epsilon(n)}{\frac{\log n}{n}} \rightarrow 0$ or $\lim_{n\rightarrow \infty} \frac{r_s(n)}{\frac{\log n}{n}} \rightarrow 0$, then $P\left((1-A_{\text{void}}) \le \epsilon(n)\right)  < 1$.
\end{thm}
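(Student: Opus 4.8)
The plan is to obtain the impossibility result for unreliable sensors by a reduction to the reliable-sensor case already settled in Theorem~\ref{thm:1Dlb}, exploiting the fact that noisy readings can only degrade the information available to the fusion center. The whole argument rests on a data-processing observation: unreliable sensors furnish strictly less information about the transmitter location than reliable ones, so any recovery guarantee that fails with reliable sensors must also fail with unreliable sensors.

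Concretely, I would argue as follows. Fix the sensor locations $x_1,\dots,x_n$, let $b_1,\dots,b_n$ be the error-free readings, and let $\tilde b_1,\dots,\tilde b_n$ be the unreliable readings, where each $\tilde b_i$ equals $b_i$ flipped independently with probability $p$. Suppose, for contradiction, that with $r_s(n)$ or $\epsilon(n)$ of smaller order than $\frac{\log n}{n}$ there were a whitespace-recovery rule acting on the unreliable readings that achieved $P\big((1-A_{\text{void}})\le \epsilon(n)\big)\to 1$. Then the same performance is achievable using reliable sensors: starting from the error-free readings $b_1,\dots,b_n$, the fusion center can itself flip each reading independently with probability $p$ to manufacture surrogate readings whose joint law is identical to that of $\tilde b_1,\dots,\tilde b_n$, and then apply the assumed unreliable-sensor rule to these surrogates. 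This simulated procedure produces exactly the same distribution of $1-A_{\text{void}}$, and hence the same success probability, while using only reliable sensors.

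This would contradict Theorem~\ref{thm:1Dlb}, which asserts that in the regime $\frac{\epsilon(n)}{\log n/n}\to 0$ or $\frac{r_s(n)}{\log n/n}\to 0$ no rule driven by reliable sensors can force $P\big((1-A_{\text{void}})\le \epsilon(n)\big)$ to $1$. Hence no such unreliable-sensor rule exists either, which is exactly the claim. Equivalently, writing $c$ for the constant bound of Theorem~\ref{thm:1Dlb}, the optimal success probability with unreliable sensors is upper bounded by its reliable-sensor counterpart, which stays below $1$.

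The only place where the argument is more than a one-line remark is the simulation step: one must check that the artificially flipped surrogates reproduce the exact conditional joint distribution of the unreliable readings given the locations, so that the induced law of $1-A_{\text{void}}$ is genuinely preserved. A complementary, even more robust way to reach the same conclusion is geometric rather than information-theoretic: the obstruction behind Theorem~\ref{thm:1Dlb} is the existence, with constant probability, of an interval of width of order $\frac{\log n}{n}$ that contains \emph{no} sensor at all (Lemma~\ref{lem:1-cov}). Such an empty interval leaves the corresponding points completely unobserved irrespective of whether the sensors that do exist are reliable, so the void region cannot correctly be extended into it; this event is a property of the sensor placement alone and is therefore untouched by measurement noise. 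Either route yields the stated bound.
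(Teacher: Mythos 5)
Your argument is correct and takes essentially the same route as the paper, whose entire proof is the one-line observation that the lower bound with unreliable sensors cannot be better than the lower bound with reliable sensors established in Theorem~\ref{thm:1Dlb}. Your simulation step (and the complementary remark that the empty-interval obstruction of Lemma~\ref{lem:1-cov} is a property of sensor placement alone, untouched by measurement noise) merely supplies the justification the paper leaves implicit.
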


Next, we show that the lower bound in Theorem \ref{thm:1Dlbunreliable} is also achievable. The proof is constructive, and follows by proposing a reconstruction strategy and analyzing its error performance.
\begin{thm}\label{thm:1Dubunreliable} When sensor measurements are in error with probability of error $p < \frac{1}{2}$, if $r_s(n) = \Theta\left(\frac{\log n}{n}\right)$, then for $\epsilon(n) = \Theta\left(\frac{\log n}{n}\right)$, $P\left( 1-A_{\text{void}} \le \epsilon(n)\right)  \rightarrow 1$ asymptotically in~$n$.
\end{thm}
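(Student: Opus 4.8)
The plan is to reuse the interval construction from the proof of Theorem~\ref{thm:1Dub}, but to replace the fragile ``all sensors read $0$'' rule (which a single false alarm would defeat) by a \emph{majority vote} within each interval. Concretely, fix a constant $c>1$, set $r_s(n)=c\log n/n$, partition $\yell$ into the $n/(c\log n)$ disjoint intervals $z_1,\dots,z_{n/(c\log n)}$ of length $c\log n/n$, and let $N_k$ denote the number of sensors in $z_k$. Exactly as in Theorem~\ref{thm:1Dub}, Lemma~\ref{lem:chernoff} together with a union bound gives that, with probability at least $1-c_4 n^{1-c/8}$, every interval satisfies $N_k\ge \frac{c\log n}{2}$; call this event $\mathcal{E}_0$. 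The proposed reconstruction is: \emph{declare $z_k$ void iff a strict majority of its sensors read $0$}.

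The key step is the error analysis of this rule under the two-sided flip model. I would classify the intervals into the at most $3M$ \emph{covered} intervals---an interval containing a transmitter together with its two neighbours, which are the only ones a transmitter of range $r_s$ can reach since the interval length equals $r_s$---and the remaining \emph{clean} intervals, in which every sensor's true reading is $0$. In a clean interval the number of sensors reporting the correct value $0$ is a sum of $N_k$ i.i.d.\ Bernoulli variables of mean $1-p>\tfrac12$; applying Lemma~\ref{lem:chernoff} with $\delta=\frac{1-2p}{2(1-p)}$, chosen so that $(1-\delta)(1-p)=\tfrac12$, bounds the probability that fewer than half read $0$ by $\exp\!\left(-\tfrac{c\log n}{2}\cdot\tfrac{\delta^2(1-p)}{2}\right)=n^{-c\,\delta^2(1-p)/4}$, a fixed negative power of $n$. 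A union bound over the at most $n/(c\log n)$ intervals then makes the probability that \emph{any} clean interval is misclassified vanish once $c$ is taken large enough; call the complementary good event $\mathcal{E}_1$. The constant margin $1-2p>0$ is exactly what powers this concentration, so the argument is insensitive to the precise value of $p<\tfrac12$---the source of the claimed robustness.

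On $\mathcal{E}_0\cap\mathcal{E}_1$ every clean interval is correctly declared void, and there are at least $\frac{n}{c\log n}-3M$ of them, each of length $\frac{c\log n}{n}$, so $A_{\text{void}}\ge 1-3M\,\frac{c\log n}{n}$ and hence $1-A_{\text{void}}\le 3M\,\frac{c\log n}{n}=\Theta\!\left(\frac{\log n}{n}\right)$. To confirm that the recovered region is genuinely transmitter-free, I would apply the identical Chernoff estimate to an interval that fully contains a transmitter: there every sensor sees the transmitter, so the correct reading is $1$ with probability $1-p>\tfrac12$, and the interval is erroneously declared void with probability at most $n^{-c\,\delta^2(1-p)/4}$; a union bound over the $M$ transmitter intervals keeps the declared void region correct with high probability. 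Combining with the matching lower bound of Theorem~\ref{thm:1Dlbunreliable} yields the stated $\Theta(\log n/n)$ scaling.

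The main obstacle---and the only point where the unreliable model genuinely bites---is guaranteeing that the per-interval misclassification probability decays fast enough to survive a union bound over $\Theta(n/\log n)$ intervals. This succeeds precisely because each interval carries $\Theta(\log n)$ sensors, which converts the Chernoff concentration into a polynomial factor $n^{-\Theta(c)}$ that can be driven below the reciprocal of the number of intervals by choosing the constant $c$ large; the constant-size gap between $1-p$ and $\tfrac12$ is what keeps the exponent bounded away from zero, so no sharper large-deviation tool is needed.
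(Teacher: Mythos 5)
Your proposal is correct and follows essentially the same route as the paper's proof: the same partition into $n/(c\log n)$ intervals of length $r_s=c\log n/n$, the same Chernoff/union-bound guarantee of $\Theta(\log n)$ sensors per interval, the same exclusion of at most $3M$ transmitter-influenced intervals, and the same majority-vote rule whose per-interval error probability decays as a negative power of $n$ that beats the union bound over $\Theta(n/\log n)$ intervals for large $c$. The only (immaterial) difference is that you bound the majority-vote error via Lemma~\ref{lem:chernoff} with $\delta=\frac{1-2p}{2(1-p)}$, whereas the paper cites the repetition-code bound $\left(2\sqrt{p(1-p)}\right)^{N_k}$; both yield the required $n^{-\Theta(c)}$ decay.
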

\begin{proof} 
As in the proof of Theorem \ref{thm:1Dub}, let $r_s(n) =  \left(\frac{c \log n}{n}\right)$, where $c>1$ is a constant. Divide the segment $\yell$ into smaller intervals of length $\left(\frac{c \log n}{n}\right)$, and index these segments as $z_1$ to $z_{\left(\frac{n}{c \log n}\right)}$. Let the number of sensors lying in $z_k$ be $N_k$. From the Chernoff bound, $P( N_k < \frac{c \log n}{2}) \le n^{-c/8}$, and taking the union bound, $P(N_k < \frac{c \log n}{2} \ \text{for any}\  k)\  < c_4n^{1-c/8}$, where $c_4$ is a constant. Thus, with high probability, for large $c$, each smaller interval $z_k$  contains at least  $\frac{c \log n}{2} $ sensors. 

As before, since there are only $M$  transmitters, at the maximum only $M$ smaller intervals among $z_1 \dots z_{\left(\frac{n}{c \log n}\right)} $ contain any  transmitter. Since a  transmitter lying in $z_k$ can only influence readings of sensors lying in $3$ adjacent intervals $z_{k-1}, z_{k}$ and $z_{k+1}$. Therefore, in reality at least $ \left(\frac{n}{c \log n}\right) - 3M$ intervals among $z_1 \dots z_{\left(\frac{n}{c \log n}\right)} $ should have all sensor readings as $0$. However, because of errors in sensor readings, some of the sensors in these intervals have readings $1$ instead of $0$. 
To resolve this problem, we use the majority rule to decide whether an interval $z_k$ contains a  transmitter or not. Thus, a  transmitter is declared to be {\it present} in an interval $z_k$, if the number of sensors with reading $1$ are more than the number of sensors with reading $0$, and a  transmitter is declared to be {\it absent} in an interval $z_k$ otherwise. 

With this decision rule, $A_{\text{void}} = \cup_{\text{Maj}(z_k)=0} \ z_k$, where the function $\text{Maj}(z_k)$ equals $1$ if $z_k$ has a larger number of sensors with a reading of $1$ than with a reading of $0$, and equals $0$ otherwise. Therefore, the probability of interest in this case is that of missed detection, $P_{md}$, which is the probability that the majority of sensor readings in $z_k$ is $0$, given that there is a  transmitter in an interval $z_k$. Recall that each sensor makes an error with probability $p$ independently of all other sensors.
From the Chernoff bound, we know that in each interval $z_k$ there are at least $\frac{c \log n}{2} $ sensors with high probability, for large enough $c$. Let $N_k$ denote the number of sensors in $z_k$. 
Without loss of generality, we assume that $N_k$ is odd, as otherwise, we can consider one less sensor for making decisions. 
Then $P_{md} = \sum_{k=\frac{N_k+1}{2}}^{N_k} \binom{N_k}{k}p^{k}(1-p)^{N_k-k}$. 
  From an upper bound known in coding theory for repetition codes \cite{BookBlahut}, $\sum_{k=\frac{N_k+1}{2}}^{N_k} \binom{N_k}{k}p^{k}(1-p)^{N_k-k} \le \left(2 \sqrt{p(1-p)}\right)^{N_k}$.  Hence, for $p< \frac{1}{2}$, $P_{md} \le a^{N_k}$, where $a<1$, and $N_k$ is of the order ${ \log n}$ with high probability. 
Thus, the probability of missed detection $P_{md}$ decreases exponentially with $\frac{c \log n}{2}$. Since there are at the maximum $\left(\frac{n}{c \log n}\right)$ intervals in $\yell$, using the union bound, the probability that there is a missed detection in any one of the $\left(\frac{n}{c \log n}\right)$ intervals is $\left(\frac{n}{c \log n}\right) {a}^{\frac{c \log n}{2}}$, where $a<1$. Thus, for $c$ large enough,  $\left(\frac{n}{c \log n}\right) {a}^{\frac{c \log n}{2}} \rightarrow 0$ with high probability as $n\rightarrow \infty$. 

Using a similar analysis, we can show that the probability of false alarm in any interval $z_k$, i.e., the probability that the majority of sensor readings in $z_k$ is $1$, given that there is no  transmitter in an interval $z_k$,  goes to zero as $n$ goes to infinity. 

Thus, with high probability, we have that $ \left(\left(\frac{n}{c \log n}\right) - 3M\right)$ intervals  not containing any  transmitter have their majority of reading equal to $0$.  Hence, $A_{\text{void}} > \left(\left(\frac{n}{c \log n}\right) - 3M\right) \left(\frac{c \log n}{n}\right)=1-3M \left(\frac{c \log n}{n}\right)$ with high probability, proving the Theorem.
\end{proof}

In this section, we considered the case when each sensor makes an error with probability $p$ in the detection of any transmitter within its radio range. The lower bound on the whitespace recovery error is the same as in the case of reliable sensors, since the error with unreliable sensors cannot be better than that with reliable sensors. For finding the matching upper bound on the whitespace recovery error, we let the radio range be of order $\frac{\log n}{n}$, so that each interval of width $\frac{\log n}{n}$ contained more or less $\log n$ sensors with high probability. Then, for each interval of width $\frac{\log n}{n}$, we proposed a  majority rule for declaring the presence or absence of transmitter in that interval. Since there are a large number of sensors  (roughly $\log n$) in each interval, if $p < 1/2$, it followed from a repetition code argument that the probabilities of false alarm and missed detection go to zero for large $n$. Thus, we showed that, if the radio range is such that there are enough number of sensors in each small interval, asymptotically, the lack of reliability of the sensors has  no effect on the whitespace recovery error.


Note that, in the whitespace identification problem discussed above, we used the locations of sensors that returned a $0$ measurement to find the available void space that is guaranteed to not contain any transmitter. 
In the next section, we consider the problem of estimating the number of transmitters and their locations, with the absolute error in locating the transmitters as the metric of interest. We show that when the number of transmitters is unknown, a localization error of $\frac{\log(n)}{n}$ can still be achieved in the large number of sensors regime, provided the transmitters are known to be at a minimum separation of the order $\frac{\log(n)}{n}$ from each other.

\section{Transmitter Localization} \label{sec:localization}
In Section \ref{sec:whitespace1}, we considered the problem of finding the whitespace  $\mathcal{A}_{\text{void}}$ that contains no transmitters using binary sensors randomly deployed in the area. In addition to finding the void space area, there are several related problems of interest, such as finding the {\it received power profile} that describes the received power from the transmitters at each point of the given area $\yell$, finding the number and locations of the transmitters, etc. Towards that end, in this section, we are interested in finding how many transmitters  are present and their locations on $\yell$ using binary readings from the $n$ sensors that are uniformly randomly distributed on $\yell$.

As in the previous section, each sensor is assumed to have sensing radius $r_s(n)$, and has two possible readings, $1$ if there is at least one transmitter at a distance of $r_s$ from it, or $0$ otherwise. For simplicity, we consider the case of reliable sensors in this section. Results with unreliable sensors follow similarly, as in Sec. \ref{sec:unreliable}. To estimate the number of transmitters and their locations, we note that each disjoint region containing sensors that returned the value $1$ contains at least one transmitter. 
Hence, we estimate the number of transmitters to be equal to the number of disjoint regions containing sensors that returned the value $1$, and we estimate the transmitter locations $\hat{x}_i$ to be the geometric centroid of each such region. Note that, any contiguous region containing sensors that returned the value of $1$ could potentially have more than $1$ transmitter.\footnote{In particular, if the region is of width greater than $2r_s$, then it must necessarily contain more than one transmitter.} This could lead to errors in estimating the number of transmitters and/or their locations, as there is no way of identifying the number of transmitters within regions containing sensors that measured a $1$. To overcome this, in this section, we assume that any two transmitters are at at least $\delta(n)>0$ distance apart. As we will see, under mild assumptions on $\delta(n)$, it is possible to correctly estimate the number of transmitters and their locations with high probability, asymptotically in $n$. 
 
Let there be $M$ transmitters on $\yell$, and the true location of transmitter $j$ be 
$x_j$. Let the estimate of $M$, the number of transmitters, be ${\hat M}$, and the 
estimate of the location of the $i^{th}$ transmitter using the $n$ sensor readings be ${\hat x}_i, i =1,2,\dots,{\hat M}$, as described above. For both the true locations and their estimates, we index the transmitters  from left to right on $\yell$, such that $x_1\le x_2 \le \dots \le x_M$ and ${\hat x}_1 \le {\hat x}_2 \le \dots \le {\hat x}_{{\hat M}}$. 
Then we define the error metric as $\sum_{i=1}^{\max \{M, {\hat M}\}} |x_i -{\hat x}_i|$, where by definition we have that for $M < {\hat M}$, $x_i =0, i = M+1, \dots, {\hat M}$, and for $M >  {\hat M}$, ${\hat x}_i =1, i = {\hat M}+1, \dots, M$. This metric penalizes a mismatch between the actual and estimated number of  transmitters, in addition to the error in localizing them.

We are interested in finding minimum error $\epsilon(n)$, transmitter separation $\delta(n)$ and radio range $r_s(n)$ that guarantees that 
\begin{equation} \label{eq:probestM}
 \min_{r_s(n), \delta(n), \epsilon(n)}\lim_{n\rightarrow \infty} P\left( \sum_{i=1}^{\max \{M, {\hat M}\}} |x_i -{\hat x}_i| \le \epsilon(n)\right) =1.
 \end{equation} 

The next Theorem characterizes the lower bounds on  $\epsilon(n)$, $r_s(n)$ and $\delta(n)$ required for high probability estimation of the number and locations of the transmitters.

\begin{thm}\label{thm:1Dlbloc} If $\lim_{n\rightarrow \infty} \frac{\epsilon(n)}{\frac{\log n}{n}} \rightarrow 0$, then  $P\left( \sum_{i=1}^{\max \{M, {\hat M}\}} |x_i -{\hat x}_i| \le \epsilon(n)\right)  < 1$. Similarly, if $\lim_{n\rightarrow \infty} \frac{r_s(n)}{\frac{\log n}{n}} \rightarrow 0$, or $\lim_{n\rightarrow \infty} \frac{\delta(n)}{\frac{\log n}{n}} \rightarrow 0$ then  $P\left( \sum_{i=1}^{\max \{M, {\hat M}\}} |x_i -{\hat x}_i| \le \epsilon(n)\right)  < 1$.
\end{thm}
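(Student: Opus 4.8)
The plan is to treat the three conditions separately, in each case exhibiting a transmitter configuration and invoking Lemma~\ref{lem:1-cov} to force a localization error exceeding $\epsilon(n)$ with probability bounded away from $0$, in the spirit of Theorem~\ref{thm:1Dlb}. As there, it suffices to use the smallest number of transmitters that activates each obstruction: a single transmitter ($M=1$) for the $\epsilon(n)$ and $r_s(n)$ conditions, and a pair of transmitters ($M=2$) for the $\delta(n)$ condition; fewer transmitters can only weaken the estimator's task and hence only weaken the bound.

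For the $r_s(n)$ condition, I would argue that if $r_s(n)=o(\log n/n)$ then $2r_s(n)=o(\log n/n)$, so by Lemma~\ref{lem:1-cov} there is, with probability at least $1-c_2$, an interval of width $2r_s(n)$ containing no sensor. Placing the (arbitrary) transmitter at the centre of such an interval leaves every sensor outside its range, so all readings are $0$, no $1$-region is formed, and the estimator returns $\hat M=0$. The metric convention then charges $|x_1-1|$, which is $\Theta(1)$ for a centrally located transmitter and therefore exceeds $\epsilon(n)$; hence $P(\cdot\le\epsilon(n))\le c_2<1$.

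For the $\delta(n)$ condition, I would place two transmitters a distance $\delta(n)$ apart near the centre of $\yell$. If $2r_s(n)\ge\delta(n)$ their footprints overlap, every sensor lying between them reads $1$, and the two transmitters merge into a single $1$-region; the estimator then reports $\hat M=1$, and the unmatched transmitter is charged $\Theta(1)$ by the metric, so failure is in fact deterministic. If instead $2r_s(n)<\delta(n)$, then $r_s(n)=o(\log n/n)$ and the $r_s(n)$ argument above already forces a missed detection of at least one of the two transmitters with probability at least $1-c_2$. Either way $P(\cdot\le\epsilon(n))<1$.

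For the $\epsilon(n)$ condition, which I expect to be the main obstacle, I would again take $M=1$ and analyse the centroid estimate of the unique $1$-region. The estimate is pinned to the span of the sensors reading $1$, so a gap of width $w$ between a footprint boundary $x_1\pm r_s(n)$ and the nearest interior sensor biases the centroid by $\Theta(w)$. If $\epsilon(n)=o(\log n/n)$ then $2\epsilon(n)=o(\log n/n)$, and Lemma~\ref{lem:1-cov} guarantees an empty interval of width $2\epsilon(n)$ with probability at least $1-c_2$; aligning one footprint boundary of the arbitrarily placed transmitter with this interval forces $|x_1-\hat x_1|\ge\epsilon(n)$. The delicate point, exactly as in the proof of Theorem~\ref{thm:1Dlb}, is that the empty interval furnished by Lemma~\ref{lem:1-cov} occurs at a random location whereas the transmitter must be fixed in advance; making the ``place the transmitter at the gap'' step rigorous---i.e.\ passing from ``some interval of width $2\epsilon(n)$ is empty'' to a worst-case but admissible transmitter position---together with checking that the count-mismatch terms in the metric only add to the error rather than cancel it, is where the real work lies.
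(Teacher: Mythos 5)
Your proposal follows essentially the same route as the paper. The paper handles the $\epsilon(n)$ and $r_s(n)$ conditions by simply citing Theorem~\ref{thm:1Dlb} with $M=1$, and handles the $\delta(n)$ condition with exactly your dichotomy: two transmitters at separation $\delta(n)$ can be distinguished only if either some sensor between them reads $0$ --- which, since the pair can sit anywhere on $\yell$, requires every interval of length $\delta(n)$ to contain a sensor and hence fails with probability at least $1-c_2$ by Lemma~\ref{lem:1-cov} when $\delta(n)=o(\log n/n)$ --- or else $r_s(n)\le\delta(n)=o(\log n/n)$, in which case the $r_s$ bound already applies. The ``delicate point'' you defer in the $\epsilon(n)$ case (the transmitter is fixed in advance while the empty interval supplied by Lemma~\ref{lem:1-cov} appears at a random location) is precisely the step the paper absorbs into the proof of Theorem~\ref{thm:1Dlb}, where it is dispatched only with the remark that the transmitter is arbitrarily located so that all intervals must be covered; the paper does no additional work there, so your write-up is no less complete than the published proof, and your explicit handling of the count-mismatch penalty ($\hat M\ne M$ charging a $\Theta(1)$ term) is if anything more careful than the paper's.
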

\begin{proof} 
First, the bounds on $\epsilon(n)$ and $r_s(n)$ follow from Theorem \ref{thm:1Dlb} with $M=1$. For the lower bound on $\delta(n)$, consider $M=2$ transmitters at locations $x_1$ and $x_2$ with distance $|x_2-x_1| = \delta(n)$ between them. To be able to decide that two transmitters are present,  i) at least one sensor has to lie in between $x_1$ and $x_2$ with a reading of $0$, or 
ii) $r_s(n)$ has to be less than or equal to $\delta(n)$, since otherwise the sensors lying  outside of the interval $(x_1,x_2)$ cannot discern whether there are two transmitters or one, as both $x_1$ and $x_2$ will possibly be in their range $r_s$.



Since the two transmitters can be arbitrarily located on $\yell$, condition i) implies that each interval of length $\delta(n)$ on $\yell$ should contain at least one sensor. Similar to the proof of Theorem \ref{thm:1Dlb}, the probability that each interval of length $\delta(n)$ contains at least one sensor is upper bounded by a constant less than $1$ if $\lim_{n\rightarrow \infty}\frac{\delta(n)}{\frac{\log n}{n}} = 0$. We already know that, for 
$\lim_{n\rightarrow \infty} \frac{r_s(n)}{\frac{\log n}{n}} \rightarrow 0$,   $P\left( \sum_{i=1}^{\max \{M, {\hat M}\}} |x_i -{\hat x}_i| \le \epsilon(n)\right)  < 1$ . Thus, conditions i) and ii) together imply that for $\lim_{n\rightarrow \infty}\frac{\delta(n)}{\frac{\log n}{n}} = 0$, $P\left( \sum_{i=1}^{\max \{M, {\hat M}\}} |x_i -{\hat x}_i| \le \epsilon(n)\right)  < 1$. 
\end{proof}

Our next result shows that $\epsilon(n) = r_s(n) = \delta(n) = \Theta\left(\frac{\log n}{n}\right)$ is sufficient for estimating the number and location of transmitters with high probability, asymptotically in~$n$.

\begin{thm}\label{thm:1Dubloc} 
If $r_s(n) = \delta(n) = \epsilon(n) =\Theta\left(\frac{\log n}{n}\right)$, $P\left( \sum_{i=1}^{\max \{M, {\hat M}\}} |x_i -{\hat x}_i| \le \epsilon(n)\right)   \rightarrow 1$.
\end{thm}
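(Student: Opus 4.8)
The plan is to reuse the interval-partitioning and Chernoff machinery of Theorem~\ref{thm:1Dub}, and to supplement the whitespace argument with two structural facts: that each transmitter generates exactly one contiguous block of sensors reading $1$, and that distinct transmitters generate distinct blocks. Write $r_s(n) = c_r \tfrac{\log n}{n}$, $\delta(n) = c_\delta \tfrac{\log n}{n}$, and $\epsilon(n) = c_\epsilon \tfrac{\log n}{n}$, keeping the freedom to fix the constants $c_r, c_\delta, c_\epsilon$ at the end. Exactly as in Theorem~\ref{thm:1Dub}, I would partition $\yell$ into $\Theta(n/\log n)$ cells of length $r_s(n)$; Lemma~\ref{lem:chernoff} together with a union bound shows that, for $c_r$ large enough, every cell contains at least $\tfrac{c_r\log n}{2}\ge 1$ sensors with probability $1 - O(n^{1-c_r/8}) \to 1$. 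All of the remaining reasoning is deterministic on this single ``coverage-at-resolution-$r_s$'' event.

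First I would establish $\hat M = M$ on this event. Since the sensors are reliable, a sensor reads $1$ if and only if it lies within $r_s(n)$ of some transmitter, so the set of $1$-reading positions is precisely the union of the influence zones $[x_j - r_s, x_j + r_s]$. Choosing $c_\delta \ge 4 c_r$ forces $\delta(n) > 2 r_s(n)$, which makes these zones pairwise disjoint and leaves, between consecutive transmitters, a transmitter-free gap $(x_j + r_s, \, x_{j+1} - r_s)$ of length at least $(c_\delta - 2c_r)\tfrac{\log n}{n} \ge 2 r_s(n)$. An interval of width $\ge 2r_s(n)$ fully contains one partition cell, which carries a sensor by the coverage event; that sensor reads $0$ and hence separates the two adjacent $1$-blocks. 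Conversely, each zone $[x_j - r_s, x_j + r_s]$ has width $2 r_s(n)$ and therefore also contains a cell, so every transmitter produces a nonempty $1$-block; and because a single transmitter's zone is one interval, its $1$-sensors form a single contiguous block that cannot split (in particular each block has width at most $2r_s$, so the width criterion of the footnote is never triggered). Thus the $1$-blocks are in bijection with the transmitters and $\hat M = M$.

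Given $\hat M = M$, the localization bound is immediate. The $1$-block attached to transmitter $j$ consists of sensor positions lying in $[x_j - r_s, x_j + r_s]$, so its centroid $\hat x_j$ obeys $|\hat x_j - x_j| \le r_s(n)$. Since the zones are disjoint and ordered, the left-to-right indexing matches estimates to true locations correctly, whence $\sum_{i=1}^{M} |x_i - \hat x_i| \le M\, r_s(n) = M c_r \tfrac{\log n}{n}$. As $M$ is a fixed constant, choosing $c_\epsilon \ge M c_r$ makes this at most $\epsilon(n)$. A final union bound finishes the argument: the only failure mode is an empty partition cell, whose probability vanishes, and on its complement both $\hat M = M$ and the error bound hold deterministically, so $P\!\left(\sum_{i=1}^{\max\{M,\hat M\}} |x_i - \hat x_i| \le \epsilon(n)\right) \to 1$.

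The step I expect to demand the most care is establishing $\hat M = M$, specifically the joint calibration of the constants: one must keep $\delta(n) > 2 r_s(n)$ so that distinct transmitters cannot merge into a single $1$-block, while simultaneously keeping each inter-transmitter gap wide enough (of order $\tfrac{\log n}{n}$, at least one extra partition cell) that the coverage event forces a separating $0$-reading sensor into it. Once the detection, separation, and no-splitting claims are in place, the centroid accuracy bound and the concluding union bound are routine.
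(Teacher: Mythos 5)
Your proof is correct and rests on the same probabilistic machinery as the paper's: partition $\yell$ into $\Theta(n/\log n)$ cells of width $\Theta(\log n/n)$, apply Lemma~\ref{lem:chernoff} plus a union bound to populate every cell with high probability, and then run a purely deterministic detection/separation/localization argument on that event. Where you differ is in the deterministic part. The paper uses a two-level partition --- intervals $z_k$ of width $10 r_s$ (so that $\delta > 10 r_s$ forces at most one transmitter per $z_k$), each split into five sub-cells of width $2r_s$ --- and localizes a transmitter by identifying which sub-cells are all-$1$ and which are all-$0$, yielding a per-transmitter error of $4r_s$. You instead work at the finer resolution $r_s$, note that with reliable sensors the $1$-readers are exactly the sensors in the pairwise disjoint influence zones $[x_j - r_s, x_j + r_s]$ (your choice $c_\delta \ge 4 c_r$ both keeps the zones disjoint and puts a full cell, hence a separating $0$-sensor, in every inter-transmitter gap), and bound the centroid error by $r_s$ directly. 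Your version is tighter ($r_s$ versus $4r_s$ per transmitter, and $\delta \ge 4 r_s$ versus $\delta > 10 r_s$) and is more faithful to the centroid-of-$1$-regions estimator actually defined at the start of Section~\ref{sec:localization}, whereas the paper's proof quietly substitutes a different localization rule. The only points you leave implicit --- that a transmitter within $r_s$ of an endpoint of $\yell$ still has a truncated zone containing a full cell, and that a sensor landing at distance exactly $r_s$ from a transmitter is a measure-zero event --- are harmless.
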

\begin{proof} Let $r_s(n) = \frac{c\log n}{n}$, $c > 1$, and let the minimum transmitter separation $\delta(n) =  \frac{d \log n}{n}$, where $d >10c$. Divide the region $\yell$ into smaller intervals of length $\left(\frac{10c \log n}{n}\right)$, and index these segments as $z_1$ to $z_{\left(\frac{n}{10c \log n}\right)}$. Hence, each interval $z_k$ contains at most one transmitter.

%
Partition each $z_i$ into five equal parts of width $ \left(\frac{2c \log n}{n}\right)$, and index them with $P_{i,1},\ldots, P_{i,5}$. 
Let the number of sensors lying in $P_{i,j}$ be $N_{i,j}$. From the Chernoff bound, $P( N_{i,j} < c \log n) \le n^{-c/4}$, and taking the union bound, $P(N_{i,j} < c \log n \ \text{for any}\  i, j=1,\dots,5)\  < c_4n^{1-c/4}$, where $c_4$ is a constant. Thus, with high probability, each partition of each interval contains at least  $c \log n $ sensors for large enough~$c$.
\begin{figure}[t]
\centering
\includegraphics[width=4.5in]{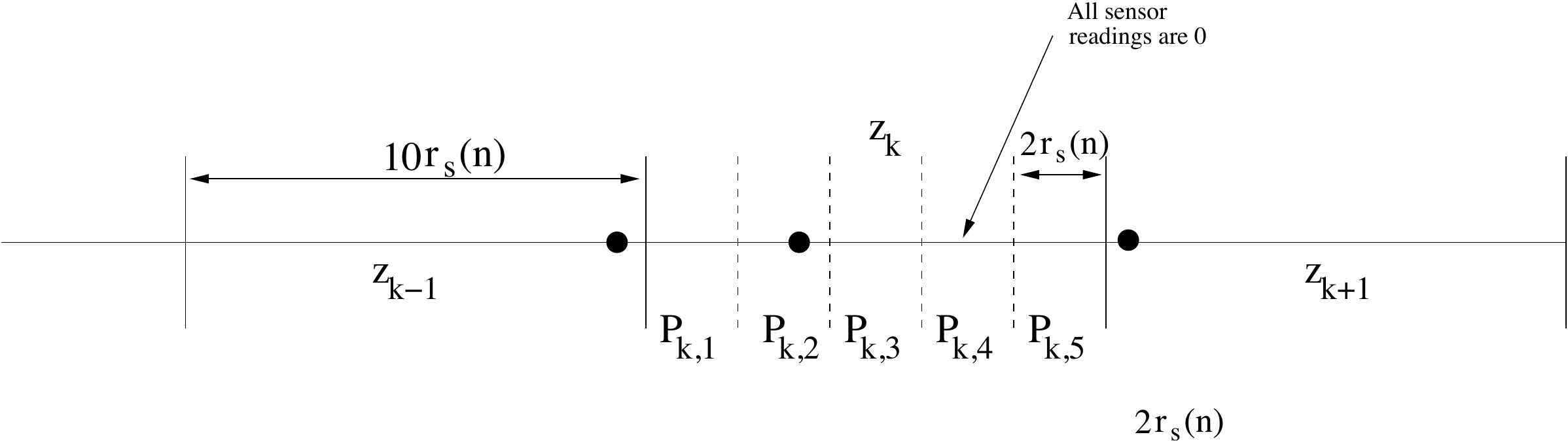}
\caption{Worst case placement of transmitters for estimating their locations.}
\label{fig:worsttxloc}
\end{figure}

Consider any interval $z_k$. If all the sensor readings in $z_k$ are zero, or if only the readings of the sensors lying in left half of $P_{k,1}$ or right half of $P_{k,5}$ are $1$, then no transmitter lies in $z_k$. Otherwise, we know that there is a  transmitter lying in $z_k$, say $x_i$.  Note that, it is hardest to detect the location of transmitter lying in $z_k$ if there are transmitters in both intervals $z_{k-1}$ and $z_{k+1}$, and they lie closest to the boundary of $z_{k}$, as shown in Fig. \ref{fig:worsttxloc}, where black dots represent the transmitters. 
Let  $x_i \in P_{k,j}$. 
Then, an interval $W_i$ of width $ \left(\frac{2c \log n}{n}\right)$ around $x_i$ contains at least $ \left(\frac{c \log n}{n}\right)$ sensors with high probability from the Chernoff bound, and all these sensors have reading $1$. In addition, irrespective of the index $j$ of partition $P_{k,j}$ to which $x_i$ belongs, there exists $\ellm$, $\ellm\in\{1,\dots,5\}$ for which all sensors lying in the partition $P_{k,\ellm}$ have a reading of $0$, since all the sensors lying in $P_{k,\ellm}$ are at a distance greater than the radio range $ \left(\frac{c \log n}{n}\right)$ from the transmitter $x_i$ in $P_{k,j}$. For example, in Fig. \ref{fig:worsttxloc}, all sensors lying in $P_{k,4}$ have their reading equal to $0$. 
Hence, using the readings from sensors in $W_i$ and $P_{k,\ellm}$, we can identify the location of the transmitter located inside it, and the uncertainty about the $i^{th}$ transmitter location is no more than two times the width of any partition $P_{k,m}$. This is equal to $ \left(\frac{4c \log n}{n}\right)$, and hence, $|{\hat x}_i - x_i| < \left(\frac{4c \log n}{n}\right)$. Since this is true for each transmitter $i$, ${\hat M} = M$, the total localization error $\sum_{i=1}^{\max \{M, {\hat M}\}} |x_i -{\hat x}_i| \le \sum_{i=1}^M\left(\frac{4c \log n}{n}\right) \le M\left(\frac{4c \log n}{n}\right)$ with high probability. This concludes the proof.
\end{proof}

In this section, we considered the problem of estimating both the number of transmitters as well as their locations using $n$ sensors making binary measurements. We showed that not knowing the number of transmitters does not significantly change the localization error as long as the minimum separation between any two transmitters is of order $\frac{\log n}{n}$. We first showed that if the minimum transmitter separation is less than order $\frac{\log n}{n}$, then 
the localization error probability cannot go to zero. Conversely, with the  minimum transmitter separation of order $\frac{\log n}{n}$, using the Chernoff bound  we showed that if the radio range is of order $\frac{\log n}{n}$, we can partition $\yell$ into small enough intervals so that with high probability, no interval contains more than one transmitter, while simultaneously ensuring that there are enough sensors in each interval for detection of the transmitter with high probability. The main message of this section is that  if minimum separation between transmitters scales as $\frac{\log n}{n}$, then transmitter localization problem is invariant to the knowledge of the number of transmitters. For a practical scenario where transmitters are geographically separated, minimum separation requirement for our results  is easily satisfied and hence the whitespace or received power profile can be detected efficiently.

\begin{rem} Another transmitter localization problem of interest is when the number of transmitters $M$ scales with the number of sensors $n$. 
Theorem~\ref{thm:1Dubloc} suggests that if $M(n)$ scales such that the minimum distance between any two transmitters scales no faster than order $\frac{\log n}{n}$, then a localization error of $M(n)\frac{\log n}{n}$ can be guaranteed with high probability. So, clearly, for $M(n) ={\cal O}\left(\sqrt{\frac{n}{\log n}}\right)$, where the minimum distance between any two transmitters scales no faster than $\frac{\log n}{n}$ \cite{Manjunath2012WS}, we have that the localization error scales as $\sqrt{\frac{n}{\log n}}\frac{\log n}{n}$, i.e., as  $\sqrt{\frac{\log n}{n}}$. Thus, our results also extend to the case where the number of transmitters scales with~$n$, under certain conditions.
\end{rem}

\section{Optimum Distribution of the Sensor Locations} \label{sec:distribution}
Thus far, we assumed that the transmitters are arbitrarily located on ${\cal L}$, and  obtained bounds on the minimum localization error and the optimal sensing radius in the worst case scenario. In some scenarios, it may be possible to obtain the spatial distribution of the transmitters on ${\cal L}$, either as side information from the primary network or from long-term statistics collected by the sensors. In this section, we consider the optimization of the spatial distribution of the sensor locations. We assume that the transmitters are distributed over $\yell = [0,1)$ with pdf $f_X(x)$, and seek to find the optimum sensor distribution $f_\lambda(x)$ over $\yell$ that minimizes the probability of missing a transmitter.  
Mathematically, we wish to solve
\begin{equation}\label{eq:objfn}
P_f = \min_{f_\lambda(x): \int_0^1 f_\lambda(x) \dee x = 1} \int_0^1 \left( 1 - 2 r_s f_\lambda(x) \right)^n f_X(x) \dee x.
\end{equation}
In the above, given that the location of transmitter is $x$, $2 r_s f_\lambda(x)$ is the probability (for small $r_s$) that there is a sensor in an area $2 r_s$ around it. Hence, $\left( 1 - 2 r_s f_\lambda(x) \right)^n$ represents the probability that none of the sensors lie within the sensing range $r_s$ of the transmitter.  By averaging over the distribution of $x$, $P_f$ captures the probability that all the sensors have reading $0$, and completely miss the transmitter at a random location in $\yell$. Using elementary results from variational calculus \cite{gelfand2000calculus}, the optimum $f_\lambda(x)$ must satisfy
\begin{equation}\label{eq:Lagrangian}
-n \left( 1 - 2 r_s f_\lambda(x)\right)^{n-1} f_X(x) 2 r_s + \mu = 0,
\end{equation}
where $\mu$ is a Lagrange multiplier factor, and is chosen such that $\int_0^1 f_\lambda(x) \dee x = 1$. This leads to 
\begin{equation} \label{eq:flambdaopt}
f_\lambda(x) = \left( 1 - \left(\frac{\mu}{2 n r_s f_X(x)} \right)^{\frac{1}{n-1}}\right) \frac{1}{2r_s}.
\end{equation}
In the above, $f_\lambda(x)$ is taken to be $0$ for $x$ such that $f_X(x) = 0$, and when the right hand side is negative. In some cases, the above reduces to intuitively satisfying results. For example, when $f_X(x)  = 1, 0 \le x < 1$, the above implies that $f_\lambda(x) = 1, 0 \le x < 1$, i.e., the optimum density is also uniform. On the other hand, when $n=1$, i.e., when only one sensor is deployed, $f_\lambda(x)$ drops out of \eqref{eq:Lagrangian}, and hence, interestingly, any point density that is continuous and nonzero on $[0, 1)$ performs equally well. 

The value of $\mu$ that ensures $\int_0^1 f_\lambda(x) \dee x = 1$ has to be obtained using numerical techniques. This is not difficult, since the right hand side in \eqref{eq:flambdaopt} is monotonically decreasing in $\mu$, taking the value $1/2r_s > 1$ when $\mu = 0$, and taking the value $0$ as $\mu$ gets large. Thus, any simple numerical technique such as the bisection method can be used to find the value of $\mu$.

Now, substituting the optimum $f_\lambda(x)$ into \eqref{eq:objfn} and simplifying, we get 
\begin{equation}\label{eq:objfnopt}
P_f^{(\text{opt})} = \frac{(1 - 2 r_s)^n}{\left[ \int_0^1 (f_X(x))^{-\frac{1}{n-1}} \dee x \right]^{n-1}}.
\end{equation}
We recognize the denominator as the $\ell_p$ norm of $1/f_X(x)$, with $p=1/(n-1)$ (which is in fact a quasi-norm). 
Note that, substituting the uniform point distribution for $f_\lambda(x) = 1, 0 \le x < 1$ in \eqref{eq:objfn} results in $P_f^{(\text{unif})}= (1-2r_s)^n$. Thus, the performance improvement from the optimized point density depends on the magnitude of the denominator in \eqref{eq:objfnopt}. 
For example, consider the case were $X$ has a triangular distribution: $f_X(x) = 4x$ for $0 \le x < 1/2$, and $= 4(1-x)$ for $1/2 \le  x < 1$. With some algebra, it can be shown that \eqref{eq:objfnopt} reduces to
\begin{equation}
P_f^{(\text{opt})} = \frac{2(1-2 r_s)^n}{\left( \frac{n-1}{n-2}\right)^{n-1}} \approx \frac{2(1-2 r_s)^n}{e \left( \frac{n-1}{n-2}\right)}.
\end{equation} 
Thus, the optimum point density does improve performance over the uniform point density, but both scale as $(1-2 r_s)^n$ with $n$. When $r_s = (\log n)/n$, for large $n$, $(1-2 r_s)^n \approx 1/n^2$, i.e., the probability of missing a transmitter is inversely proportional to~$n^2$.


\section{Simulation Results}\label{sec:sims}
We now present Monte Carlo simulation results to illustrate the analytical results developed in this paper. We consider $M$ transmitters and $n$ sensors deployed uniformly at random locations over $\yell = [0,1)$. Sensors return a $1$ if there is a  transmitter within the sensing radius $r_s$ around them, and return $0$ otherwise. We identify the whitespace as the total area spanned by the $2 r_s$ regions around sensors that return $0$. To estimate the number of transmitters and their locations, we first identify the occupied space as the union of the $2 r_s$ regions around sensors that return $1$. Then, for each contiguous occupied region of width smaller than $2 r_s$, we identify one transmitter at the center of the region. In contiguous occupied regions of width greater than $2 r_s$, we identify $\lfloor \text{width}/2 r_s \rfloor$ transmitters,  placed uniformly in the region. We compute the probability of the whitespace recovered exceeding $1-\epsilon(n)$, i.e., the objective function in \eqref{eq:prob}, with $\epsilon(n) = (\log n)/n$, and the probability of the transmitter localization error as in \eqref{eq:probestM}, by averaging over $10,000$ instantiations of transmitter and sensor deployments. 

Figure \ref{fig:wsrecovery} shows the probability of the whitespace recovered exceeding $1-\epsilon(n)$, i.e., the objective function in \eqref{eq:prob}, versus the number of sensors $n$, with $M = 1$ and $4$ transmitters and $\epsilon(n) = \log(n)/n$. We see that $r_s(n) = \log(n)/n$ outperforms the other scaling factors, which is in line with the result in Theorem~\ref{thm:1Dub}. In Fig.~\ref{fig:estnumtx}, we plot the probability that the sum absolute error in localizing the transmitters is $< \epsilon(n)$, given by \eqref{eq:probestM}. We set $\epsilon(n) = \log(n)/n$, and compare the performance of three different scalings for $r_s$: $\log(n)/n$, $(\log(n)/n)^2$, and $\sqrt{\log(n)/n}$, for $M=1$ and $M=4$ transmitters. We see that $\log(n)/n$ captures the optimal scaling  of the radio range with $n$, and it significantly outperforms the other scalings considered. Moreover, even at moderate or low values of $n$, scaling $r_s(n)$ at a rate that is higher or lower than $\log(n)/n$ results in a significant degradation in the performance. 

\begin{figure}[t]
\centering
\includegraphics[trim = 3cm 9cm 2cm 9cm, width=5in]{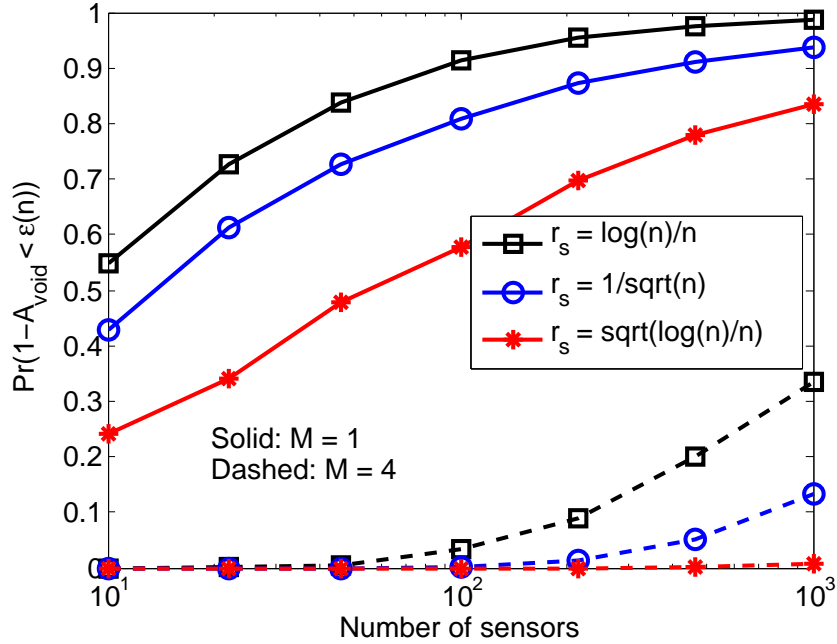}
\caption{Probability that the whitespace recovered is $> 1 - \epsilon(n)$, with $\epsilon(n) = \log(n)/n$.}
\label{fig:wsrecovery}
\end{figure}

\begin{figure}[t]
\centering
\includegraphics[trim = 3cm 9cm 2cm 9cm, width=5in]{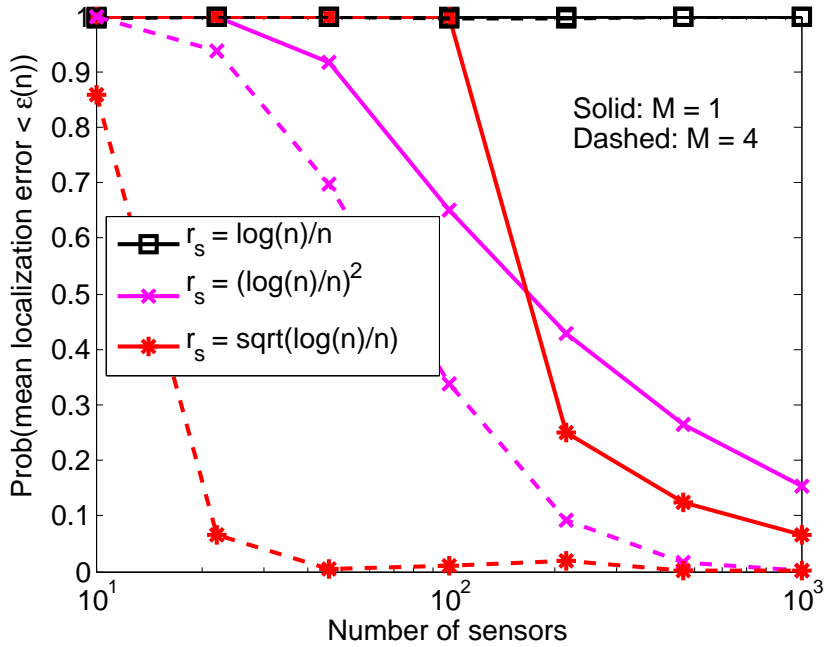}
\caption{Probability that the sum absolute error in localizing the transmitters is $< \epsilon(n)$, with $\epsilon(n) = \log(n)/n$.}
\label{fig:estnumtx}
\end{figure}

\begin{figure}[t]
\centering
\includegraphics[trim = 3cm 9cm 2cm 9cm, width=5in]{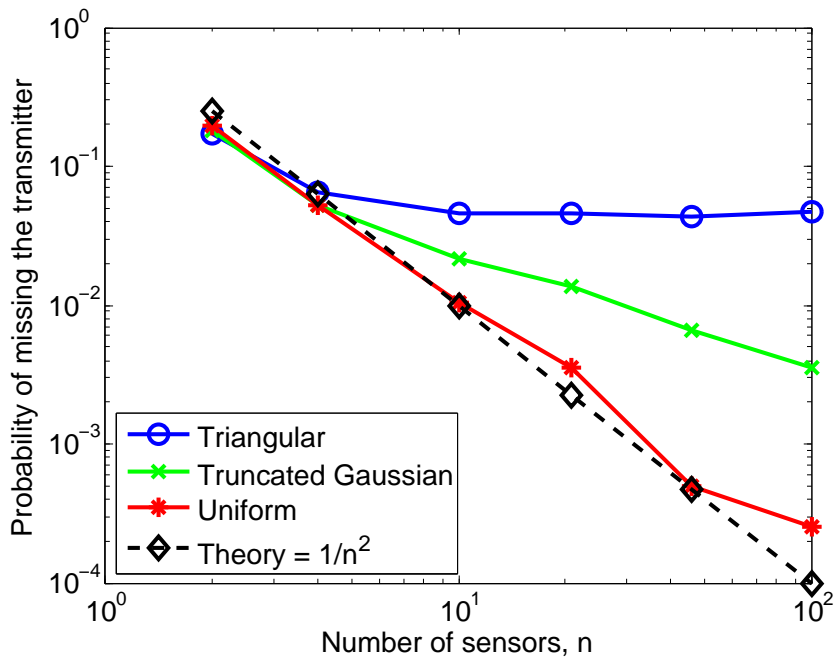}
\caption{Probability of missing a transmitter, when $n$ sensors with sensing radius $r_s(n) = \log(n)/n$ are deployed according to the triangular, truncated Gaussian and uniform distributions. }
\label{fig:prmisstx}
\end{figure}

Finally, Fig.~\ref{fig:prmisstx} shows the probability of missing a transmitter uniformly distributed on $[0,1)$, and $n$ sensors with sensing radius $r_s(n) = \log(n)/n$ are deployed according to the triangular, truncated Gaussian and uniform distributions. For the triangular distribution, we consider $f_\lambda(x) = 4x$ for $0 \le x < 1/2$, and $= 4(1-x)$ for $1/2 \le  x < 1$. For the truncated Gaussian distribution, we consider the Gaussian distribution with mean $0.5$ and standard deviation $0.25$, truncated to $[0, 1]$. We see that, as expected, the uniform distribution outperforms the other distributions, and its performance matches with the $P_f \approx 1/n^2$ result derived in Sec.~\ref{sec:distribution}. 

\section{Conclusions}\label{sec:conc}
In this paper, we studied the recovery of whitespace using $n$ sensors that are deployed at random locations within a given geographical area. We derived the limiting behavior of the recovered whitespace as a function of $n$ and  and the sensing radius $r_s$, and showed that both the whitespace recovery error (loss) and the radio range optimally scale as $\log(n)/n$ as $n$ gets large.  We also showed that, surprisingly, the radio range scaling of $\log(n)/n$ is optimal even with unreliable sensors. Using the sum absolute error in transmitter localization as the metric, we also analyzed the optimal scaling of the radio range that minimizes the localization error with high probability, as $n$ gets large. We also derived the corresponding optimal localization error, and showed that it scales as $\log(n)/n$ as well. Finally, we derived the optimal distribution of sensors that minimizes the probability of missing a transmitter, for a given distribution of the transmitters, and analyzed the behavior of the miss detection probability as $n$ is increased. Our results yielded useful insights into the number of sensors to be deployed and the  radio range for detecting transmitters that maximizes the recovered whitespace and accurately localizes the transmitters within the given geographical area. Future work could involve extending the sensor detection model to  account for temporal and spatial variations in the signal power due to shadowing, multipath fading, transmitter movement, etc.

\bibliographystyle{IEEEtran}
\bibliography{IEEEabrv,refs}


\end{document}